\documentclass{siamart251216}

\usepackage{algorithm}
\usepackage{algorithmic}
\usepackage{amsmath,amsfonts,amssymb}
\usepackage{array}
\usepackage[caption=false,font=normalsize,labelfont=sf,textfont=sf]{subfig}
\usepackage{textcomp}
\usepackage{stfloats}
\usepackage{url}
\usepackage{verbatim}
\usepackage{graphicx}
\usepackage{cite}
\usepackage{dsfont,color}
\usepackage{tikz}
\usepackage{bbm}
\usepackage{enumitem}
\usepackage{hyperref}

\newsiamremark{remark}{Remark}

\usepackage{multirow}
\usepackage{diagbox}

 \title{A Unified Fractional Regularization Framework for Sparse Recovery}
 \author{Yinhao Zhao\thanks{School of Information Science and Technology, ShanghaiTech University, 201210
 (\email{zhaoyh22025@shanghaitech.edu.cn}).}
  \and Haoyu He\thanks{School of Information Science and Technology, ShanghaiTech University, 201210
 (\email{hehy12024@shanghaitech.edu.cn}).}
 \and Chuanqi Ma\thanks{School of Information Science and Technology, ShanghaiTech University, 201210
 (\email{machq@alumni.shanghaitech.edu.cn}).}
  \and Hao Wang\thanks{School of Information Science and Technology, ShanghaiTech University, 201210
 (\email{haw309@gmail.com})}}

\begin{document}

\maketitle

\begin{abstract}
We propose a unified fractional regularization framework for sparse signal recovery based on the $\ell_1/\ell_p^q$ model. 
This model generalizes several widely used sparsity-promoting regularizers and provides additional flexibility through the parameters $p$ and $q$. 
Our main theoretical contribution is the characterization of the equivalence between the first-order stationary points of the $\ell_1/\ell_p^q$ formulation and the subtractive $\ell_1-\alpha\ell_p$ model, thereby offering a unified perspective on these nonconvex regularizers. 
In addition, we establish a new sufficient recovery condition under the Restricted Isometry Property (RIP), which shows that the proposed framework can provide relaxed recovery guarantees and improved robustness. 
To solve the resulting nonconvex problem, we develop a majorization--minimization (MM) algorithm and prove its convergence by using the Kurdyka--{\L}ojasiewicz (KL) property. 
Numerical experiments on sparse recovery problems with different sensing matrices and MRI reconstruction demonstrate that the proposed approach outperforms existing methods in recovery accuracy.
\end{abstract}

\begin{keywords} 
Sparse  recovery; Fractional regularization;   Restricted Isometry Property; Majorization Minimization algorithm. 
\end{keywords}

\section{Introduction}\label{sec1}

Sparse recovery aims to find a sparse solution $x\in\mathbb{R}^n$ (i.e., with most entries being zero) to the underdetermined linear system $Ax=b$, where 
$A\in\mathbb{R}^{m\times n}$  is the sensing matrix and $b\in\mathbb{R}^m$ is the observed signal. This problem arises in numerous applications including compressed sensing (CS), magnetic resonance imaging (MRI), and radar signal processing. The most direct formulation is the $\ell_0$-norm minimization:
\begin{equation}\label{prob:l0} 
\begin{aligned}
\operatorname{minimize} \quad & \|x\|_0 \quad
\operatorname{subject \ to} \quad & Ax = b,
\end{aligned}
\end{equation}
where 
$\|x\|_0 = \sum_{i=1}^n \mathbbm{1}_{\{x_i\neq 0\}}$ 
  counts the number of nonzeros. However, solving \eqref{prob:l0} is NP-hard \cite{natarajan1995sparse}, rendering it computationally intractable for large-scale problems. Consequently, a standard approach is to replace the non-convex $\ell_0$ norm with the convex $\ell_1$ norm, leading to the Basis Pursuit (BP) problem \cite{chen2001atomic}:
\begin{equation}\label{prob:l1}
\begin{aligned}
\operatorname{minimize} \quad & \|x\|_1 \quad
\operatorname{subject \ to} \quad & Ax = b.
\end{aligned}
\end{equation}

Extensive research has established that when the sensing matrix $A$ satisfies 
the Restricted Isometry Property (RIP) \cite{candes2005decoding} and the signal $x$ is sufficiently sparse, the $\ell_1$ minimizer can exactly recover $x$. 
This result has played a central role in the development of compressed sensing and has motivated the design of efficient optimization algorithms, such as FISTA \cite{beck2009fast} and ADMM \cite{boyd2011distributed}.

Importantly, a broad class of random matrices, including Gaussian, sub-Gaussian, and subsampled Fourier matrices, have been shown to satisfy the RIP with high probability \cite{baraniuk2007compressive, rudelson2008sparse}, which provides theoretical justification for their widespread use in practical applications.


In recent years, nonconvex surrogates such as $\ell_p (0<p<1)$ \cite{chartrand2007exact} and SCAD \cite{fan2001variable} have attracted attention for providing tighter approximations to the $\ell_0$ norm. Within this context, two prominent paradigms have emerged:

\begin{enumerate}
\item[(a)] 
Subtractive Models: The $\ell_1-\alpha\ell_p$ regularization \cite{huo_boldsymboll_1-beta_2023}, defined as:  
\begin{equation}
\begin{aligned}
\text{minimize} \quad & \|x\|_1 - \alpha \|x\|_p \quad
\text{subject to} \quad & Ax = b,
\end{aligned}
\label{prob:l1-lp}
\end{equation}
where 
$0<\alpha<n^{1-1/p}$ ensures the penalty remains meaningful. By adjusting $\alpha$, 
one balances convexity ($\ell_1$ dominance when $\alpha\to 0^+$) and nonconvexity (enhancing sparsity when
$\alpha$ increases). 
Moreover, the $\ell_1-\alpha\ell_p$ model provides an effective approach for ratio-based sparsity regularization \cite{huo_boldsymboll_1-beta_2023}.

\item[(b)] Ratio Models: The $\ell_1/\ell_p$ regularization ($p>1$) \cite{RN547}, which is scale-invariant and highly effective for high-dynamic signals:
\begin{equation}
\begin{aligned}
\operatorname{minimize} \quad & \frac{\|x\|_1}{\|x\|_p} \quad
\operatorname{subject \ to} \quad & Ax = b.
\end{aligned}
\label{prob:l1/lp}
\end{equation}
This ratio shares key sparsity-inducing properties with the $\ell_0$ norm. Crucially, \eqref{prob:l1/lp} is scale-invariant: for any $\beta > 0$, $ \frac{\beta\|x\|_1}{\beta \|x\|_p} =  \frac{\|x\|_1}{\|x\|_p}$,  ensuring the optimization is insensitive to the magnitude of $x$. When $p=2$, the model simplifies to $\ell_1/\ell_2$, where the ratio $\ell_1/\ell_2$ was introduced as a sparsity measure in \cite{hoyer2004non}. Theoretical analysis shows that under a strong Null Space Property (sNSP), sparse solutions are local minimizers of \eqref{prob:l1/lp} \cite{RN471}. Furthermore, DC algorithms applied to $\ell_1/\ell_2$ exhibit linear convergence \cite{RN540}, and efficient solvers based on ADMM, bisection, and adaptive methods have been developed \cite{RN471, RN493, RN475, RN465, RN540, tao2023study}.

\end{enumerate}

Previous studies have explored the Null Space Property (sNSP) \cite{RN471} and DC algorithms \cite{RN540, RN493, RN475, RN465} for ratio models. Recently, the squared $\ell_1/\ell_2$ ratio model has also been studied for sparse recovery \cite{jia2025sparse}. Notably, it was observed in \cite{RN465} that \eqref{prob:l1/lp} is equivalent to \eqref{prob:l1-lp} only when $\alpha$ is set to the specific value $\|x^*\|_1/\|x^*\|_p$. However, since the optimal $x^*$ is unknown a priori, a fundamental question remains: Does a general equivalence exist between difference-based and ratio-based regularizations for arbitrary $\alpha>0$? Furthermore, does a unified framework exist that encompasses these variants while providing robust RIP guarantees?

The $\ell_1/\ell_2$ ratio is particularly valued for its scale invariance and its close approximation to the $\ell_0$ norm. A key insight is that $\ell_1/\ell_2$ regularization is asymptotically equivalent to $\ell_1-\alpha \ell_2$ regularization when the optimal solution $x^*$ is known. Specifically, setting $\alpha^* =  \|x^*\|_1/\|x^*\|_2$ in \eqref{prob:l1-lp} yields the same global optimum as \eqref{prob:l1/lp} \cite{RN465}. However, this equivalence requires prior knowledge of $x^*$, which is unavailable in practice.

However, a natural question arises about the regularization of the $\ell_1-\alpha \ell_2$ regularization for more general  $\alpha$. Specifically, does $\ell_1-\alpha \ell_2$ regularization always correspond to some form of fraction regularization, or does it lead to entirely different types of regularization? Addressing this gap requires a deeper investigation into the structural equivalence of these regularizations, as well as an understanding of cases where the formulation $\ell_1-\alpha \ell_2$ may lack a direct correspondence to the $\ell_1/\ell_2$ regularization. While concurrent work \cite{zhan2025p} has extended ratio models to block-sparse scenarios, the structural connection between ratio and subtraction models remains unresolved.   The resolution of this question has significant implications for the development of robust and flexible regularization techniques for sparse signal recovery.

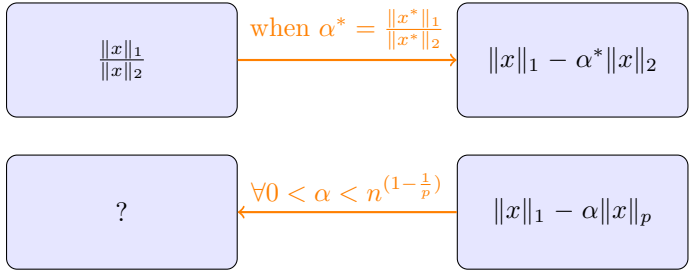
\begin{figure}[h!]
  \centering
  \begin{tikzpicture}[node distance=4cm, auto]
    \tikzstyle{mynode}=[
      draw,
      rounded corners,
      fill=blue!10,
      text width=3cm,
      align=center,
      minimum width=3cm,
      minimum height=1.5cm,
      inner sep=1pt
    ]

    \node[mynode] (A) { $\frac{\|x\|_1}{\|x\|_2}$};
    \node[mynode, right of=A,node distance=6cm] (B) { $\|x\|_1 - \alpha^* \|x\|_2$};
    \node[mynode, below of=A, node distance=2cm] (C) { ?};
    \node[mynode, below of=B, node distance=2cm] (D) { $\|x\|_1 - \alpha \|x\|_p$};

    \draw[->, thick, orange] (A) -- (B) node[midway, above, sloped] { when $\alpha^* = \frac{\|x^*\|_1}{\|x^*\|_2}$};
    \draw[->, thick, orange] (D) -- (C) node[midway, above, sloped] {$\forall 0< \alpha<n^{(1-\frac{1}{p})}$};
  \end{tikzpicture}
  \caption{Relationship between $\ell_1/\ell_2$ and $\ell_1-\alpha\ell_2$ when $\alpha= \alpha^* = \|x^*\|_1/\|x^*\|_2$. The central question is whether $\ell_1-\alpha\ell_p$ for arbitrary $\alpha$ admits a similar fractional equivalence.  
This paper provides a rigorous resolution to this question.} 
\end{figure}

To bridge this gap, we propose a generalized fractional model:
\begin{equation}
\label{prob:original}
\begin{aligned}
\operatorname{minimize} \quad &  \frac{\|x\|_1}{\|x\|_p^q} \quad
\operatorname{subject \ to} \quad & Ax = b,
\end{aligned}
\end{equation}
where $b\ne 0$ (ensuring $x\ne 0$), $p>1$, and $0<q\le 1$. The exponent $q$ introduces flexibility, allowing the model to adapt to different sparsity structures. Notably, when $q=1$, \eqref{prob:original} reduces to \eqref{prob:l1/lp}.

 Our main contributions are:
\begin{enumerate}
\item  {Theoretical Equivalence.} We prove that any critical point of \eqref{prob:original} corresponds to a critical point of an $\ell_1 - \alpha\ell_p$ problem for a specific $\alpha$. This provides a stationary-point-level resolution to the open question by revealing the intrinsic link between fractional and difference-based regularizations.
\item  {Recovery Guarantee.} We derive a new RIP-based condition  for \eqref{prob:original}. This condition is weaker than many existing nonconvex guarantees \cite{chartrand2007exact}, ensuring robustness even under moderate coherence.
\item  {Efficient MM Algorithm.} We develop a tailored Majorization Minimization (MM) algorithm with an adaptive scheme. We prove its global convergence and local convergence rate.
\item {Superior Performance.} Numerical experiments on sparse signal recovery, time-domain signal reconstruction, and MRI reconstruction demonstrate that the proposed $\ell_1/\ell_p^q$ model outperforms the classical $\ell_1$, $\ell_1 - \ell_2$, and $\ell_1/\ell_2$ models in terms of recovery accuracy.
\end{enumerate}

The remainder of this paper is organized as follows. Section \ref{sec:notation} introduces notation and preliminary concepts. Section \ref{sec:equivalence_KL} presents the theoretical equivalence between \eqref{prob:original} and $\ell_1-\alpha\ell_p$ regularization. Section \ref{sec:rip} provides the  sufficient recovery condition under RIP.  Section \ref{sec:algorithm} develops our MM algorithm for solving \eqref{prob:original} and section \ref{sec:convergence} analyzes its convergence property. Section \ref{sec:experiments} provides extensive numerical results comparing our method against benchmarks. Finally, Section \ref{sec:conclusion} concludes the paper.

\section{Notation and preliminaries}
\label{sec:notation}

 Let $\mathbb{N}$ denote the set of positive integers. 
 $\mathbb{R}^n_+$ denotes the set of vectors in 
 $\mathbb{R}^n$ with all components nonnegative, while 
$\mathbb{R}^n_{++}$ denotes the set with all components strictly positive.
 For $x, y \in \mathbb{R}^n$, $\langle x, y \rangle$ denotes the standard inner product, and $x \circ y$ denotes the Hadamard product. We write $|x| = [|x_1|, \ldots, |x_n|]^\top$. We denote by $\mathbf{1} \in \mathbb{R}^n$ the vector with all components equal to $1$. The support of $x$ is $\text{supp}(x) = \{i : x_i \neq 0\}$, and $x$ is $k$-sparse if $|\text{supp}(x)| \le k$. The $\ell_p$-norm ($p > 0$) is $\|x\|_p = (\sum_{i=1}^n |x_i|^p)^{1/p}$, with $\|x\|_0 = |\operatorname{supp}(x)|$. For a nonempty closed set $\Omega \subseteq \mathbb{R}^n$, $\operatorname{dist}(x, \Omega) = \inf_{z \in \Omega} \|x - z\|_2$, and $\delta_\Omega(x)$ is the indicator function which is $0$ if $x \in \Omega$ and $+\infty$ otherwise. We use $\mathcal{U}(a,b)$ to denote the uniform distribution on $(a,b)$, and $\mathcal{N}(a,b)$ to denote the normal distribution with mean $a$ and variance $b$.

 \subsection{Nonsmooth Analysis}
 
 To handle the nonconvex and nonsmooth objective in \eqref{prob:original}, we employ the Clarke subdifferential. For a locally Lipschitz continuous function $f: \mathbb{R}^n \to \mathbb{R}$, the Clarke generalized directional derivative of $f$ at $x$ in the direction $v$ is defined as
 $$ f^\circ(x; v) = \limsup_{y \to x, t \downarrow 0} \frac{f(y + tv) - f(y)}{t}. $$
 The Clarke subdifferential $\partial f(x)$ is then given by $\partial f(x) = \{ \xi \in \mathbb{R}^n : \langle \xi, v \rangle \le f^\circ(x; v), \forall v \in \mathbb{R}^n \}$. A point $\bar{x}$ is a stationary point of $f$ if $0 \in \partial f(\bar{x})$. Note that for the convex (or concave) components in our model, the Clarke subdifferential coincides with the classical subdifferential (or the negative of the subdifferential of $-f$).

For a closed convex set $\Omega \subseteq \mathbb{R}^n$ and a point $\bar{x} \in \Omega$, the normal cone to $\Omega$ at $\bar{x}$ is defined as
\[
N_{\Omega}(\bar{x}) = \{ \xi \in \mathbb{R}^n : \langle \xi, x - \bar{x} \rangle \le 0, \; \forall x \in \Omega \}.
\]
For the constraint set $\Omega = \{x : Ax = b\}$, the normal cone is simply the orthogonal complement of the null space of $A$, i.e., $N_{\Omega}(x) = \operatorname{range}(A^\top)$ for any $x \in \Omega$.

 \subsection{KL    Properties}
 
 The Kurdyka--\L ojasiewicz (KL) property is essential for proving the convergence of descent methods for minimizing $F$. The convergence of the sequence $\{x^k\}$ is established by verifying the standard conditions for descent methods (see \cite{Bolte2010, Bolte2014, attouch2013convergence, attouch2009convergence}). If the descent algorithm ensures a sufficient descent of the objective function   and the subgradient is bounded by the successive iterates, satisfying conditions H1--H3 in \cite[Section 2.3]{attouch2013convergence}. Given the boundedness of $\{x^k\}$ and the fact that the objective $F$ is  a KL function, one can invoke the KL framework to conclude that the entire sequence $\{x^k\}$ converges to a stationary point of $F$.

 \begin{definition}[KL Property \cite{attouch2013convergence}] 
 A proper lower semicontinuous function $f$ satisfies the KL property at $\bar{x} \in \operatorname{dom} \partial f$ if there exist $\eta \in (0, +\infty]$, a neighborhood $U$ of $\bar{x}$, and a concave function $\phi(s) = cs^{1-\theta}$ for some $c>0$ and $\theta \in [0, 1)$, such that for all $x \in U \cap \{x : f(\bar{x}) < f(x) < f(\bar{x}) + \eta\}$, one has$$ \phi'(f(x) - f(\bar{x})) \operatorname{dist}(0, \partial f(x)) \ge 1. $$The value $\theta$ is referred to as the KL exponent.
 \end{definition}

%
%
%

The KL exponent $\theta$ plays a pivotal role in determining the convergence rate of the proposed algorithm. Specifically:
\begin{itemize}
\item If $\theta \in (0, 1/2]$, the sequence $\{x^k\}$ converges to $\bar{x}$ at a linear rate, i.e., $\|x^k - \bar{x}\| \le O(\rho^k)$ for some $\rho \in (0,1)$.
\item If $\theta \in (1/2, 1)$, the convergence is sublinear, satisfying $\|x^k - \bar{x}\| \le O(k^{-\frac{1-\theta}{2\theta-1}})$.
\end{itemize}

\section{The Proposed $\ell_1/\ell_p^q$ Framework}
\label{sec:equivalence_KL}

In this section, we present the mathematical formulation of our proposed models and establish their theoretical consistency. We focus on the $\ell_1/\ell_p$ ratio and its logarithmic variants, establishing their relationship in terms of stationary points and convergence properties.

\subsection{Problem Formulations}

Let $\Omega \subseteq \mathbb{R}^n$ be a closed convex set representing the feasible region. In the context of sparse signal recovery, typical choices include the noise-free constraint $\Omega = \{x : Ax = b\}$ or the noise-aware constraint $\Omega = \{x : \|Ax-b\|_2 \le \varepsilon\}$.

To promote sparsity in the solution, we first propose the following fractional constrained model:
\begin{equation}
  \begin{aligned}
    \underset{x\in \mathbb{R}^n}{\operatorname{minimize}} \quad & F(x):= \frac{\|x\|_1}{\|x\|^q_p}  
    \qquad \operatorname{subject \ to  }   &  x \in \Omega,
  \end{aligned}
  \label{prob.proposed} 
\end{equation}
where $p>1$ and $0 < q \leq 1$. To facilitate algorithmic design, we also consider a general transformation using a monotonic $C^1$ function $\phi: \mathbb{R}_{++} \to \mathbb{R}$ with $\phi'(t) > 0$:
\begin{equation}
  \begin{aligned}
    \underset{x\in \mathbb{R}^n}{\operatorname{minimize}} \quad & F_\phi(x):= \phi\left(\frac{\|x\|_1}{\|x\|^q_p}\right) 
        \qquad \operatorname{subject \ to  }   &   x \in \Omega. 
  \end{aligned}
  \label{prob.reform}
\end{equation}
Since $\phi$ is strictly increasing, the transformed problem \eqref{prob.reform} shares the same set of global (and local) minimizers as the original fractional model \eqref{prob.proposed}.
 A primary instance of \eqref{prob.reform} is the log-ratio model ($\phi = \ln$):
\begin{equation}
  \begin{aligned}
    \underset{x\in \mathbb{R}^n}{\operatorname{minimize}} \quad &  F_{\ln}(x):= \ln\|x\|_1 - q\ln\|x\|_p 
    \qquad \operatorname{subject \ to  }   & x \in \Omega. 
  \end{aligned}
  \label{prob.log}
\end{equation}

\subsection{Equivalence of Stationary Points}

We first relate the first-order stationarity conditions of the proposed fractional (ratio) model~\eqref{prob.proposed} and the well-known difference-based $\ell_1 - \alpha \ell_p$ model:
\begin{equation}
\begin{aligned}
  \underset{x\in \mathbb{R}^n}{\operatorname{minimize}}  \quad & G(x) := \|x\|_1 - \alpha \|x\|_p 
    \qquad \operatorname{subject \ to  }   & x \in \Omega.
\end{aligned}
\label{prob.l1lp}
\end{equation}

We now establish the first-order stationarity equivalence between the general transformed fractional model~\eqref{prob.reform} and the difference-based model \eqref{prob.l1lp}. This result provides a unified connection across various functional forms. 

\begin{theorem}[Stationarity Equivalence]\label{thm:equivalence_phi}
Let $\Omega\subset\mathbb{R}^n$ be closed convex,  $x^* \in \Omega \setminus \{0\}$ and let $\phi:\mathbb{R}_{++} \to \mathbb{R}$ be a $C^1$ function with $\phi'(t) > 0$ on $\mathbb{R}_{++}$. Then $x^*$ is a stationary point of  \eqref{prob.reform} if and only if it is a stationary point of  \eqref{prob.l1lp} with 
\begin{equation}\label{eq.settingq}
\alpha = q \frac{\|x^*\|_1}{\|x^*\|_p}.
\end{equation}
\end{theorem}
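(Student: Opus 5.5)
The plan is to write out both stationarity conditions with the Clarke calculus and observe that they differ only by a positive scalar factor. Throughout, stationarity for the constrained problems means $0\in\partial F_\phi(x^*)+N_\Omega(x^*)$ and $0\in\partial G(x^*)+N_\Omega(x^*)$. Since $x^*\neq 0$ and $p>1$, the map $x\mapsto\|x\|_p$ is continuously differentiable near $x^*$, with gradient $\nabla\|x\|_p=\|x\|_p^{1-p}\,(|x|^{p-1}\circ\operatorname{sign}(x))$. Also $\|x\|_1>0$ and $\|x\|_p^q>0$ near $x^*$.

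First I compute $\partial F_\phi(x^*)$. Set $r(x)=\|x\|_1\,\|x\|_p^{-q}$, which is locally Lipschitz near $x^*$. Write $r=g\cdot h$ with $g=\|\cdot\|_1$ convex and $h=\|\cdot\|_p^{-q}$ a $C^1$ function with $h>0$. The Clarke product rule then holds with equality, because one factor is strictly differentiable:
\[
\partial r(x^*)=h(x^*)\,\partial\|x^*\|_1+\|x^*\|_1\nabla h(x^*).
\]
This can be justified directly, since $h$ is $C^1$ and the directional derivatives split. Next, $\phi$ is $C^1$ with $\phi'>0$, so the chain rule for a strictly differentiable outer function gives $\partial F_\phi(x^*)=\phi'(r(x^*))\,\partial r(x^*)$, again with equality. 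Computing $\nabla h=-q\|x\|_p^{-q-1}\nabla\|x\|_p$ and factoring out $\|x^*\|_p^{-q}$ yields
\[
\partial F_\phi(x^*)=\phi'(r(x^*))\,\|x^*\|_p^{-q}\Big(\partial\|x^*\|_1-q\frac{\|x^*\|_1}{\|x^*\|_p}\nabla\|x^*\|_p\Big).
\]

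For $G$, the same sum rule (convex term plus $C^1$ term) gives $\partial G(x^*)=\partial\|x^*\|_1-\alpha\nabla\|x^*\|_p$. With $\alpha$ chosen as in \eqref{eq.settingq}, this gives $\partial F_\phi(x^*)=c\,\partial G(x^*)$, where $c=\phi'(r(x^*))\|x^*\|_p^{-q}>0$. The normal cone $N_\Omega(x^*)$ is a cone, so $cN_\Omega(x^*)=N_\Omega(x^*)$. Hence
\[
0\in\partial F_\phi(x^*)+N_\Omega(x^*)\iff 0\in c\big(\partial G(x^*)+N_\Omega(x^*)\big)\iff 0\in\partial G(x^*)+N_\Omega(x^*),
\]
and both directions of the equivalence follow at once.

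The main obstacle is justifying that the Clarke calculus rules hold with \emph{equality}, since in general the product and chain rules give only inclusions. I would handle this by citing Clarke's theorem for sums and products in which all but one term are strictly differentiable, and for composition with a $C^1$ outer function. Alternatively, I would verify it directly: $\|\cdot\|_1$ is regular (convex) and the remaining factors are positive and $C^1$, so all terms are regular and equality holds. A secondary point is to check that the stationarity notion uses the Clarke subdifferential of the objective plus the normal cone, as the preliminaries indicate.
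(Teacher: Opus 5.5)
Your proposal is correct and follows the paper's own argument. Like the paper, you compute $\partial F_\phi(x^*)$ via the quotient (your product) rule and the chain rule, factor out the positive scalar $\phi'(F(x^*))/\|x^*\|_p^q$, and absorb it into the cone $N_\Omega(x^*)$. Your explicit justification that the Clarke product and chain rules hold with equality (because $\|\cdot\|_p^{-q}$ and $\phi$ are strictly differentiable) supplies a step the paper simply asserts.
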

\begin{proof}
Let $N_{\Omega}(x)$ denote the normal cone to the feasible set $\Omega$ at $x$. Since $p > 1$, the $\ell_p$-norm is continuously differentiable on $\mathbb{R}^n \setminus \{0\}$.
  The first-order optimality condition for problem~\eqref{prob.l1lp} at $x^*$ reads
\begin{equation}\label{eq:stat_diff}
0 \in \partial \|x^*\|_1 - \alpha \nabla \|x^*\|_p + N_{\Omega}(x^*).
\end{equation}
 For the composite objective $F_\phi(x) = \phi\bigl(F(x)\bigr)$ with $F(x) = \|x\|_1 / \|x\|_p^q$, the stationarity condition of problem~\eqref{prob.reform} at $x^*$ is
\begin{equation}\label{eq:stat_ratio_phi}
0 \in \phi'\bigl(F(x^*)\bigr) \cdot \partial F(x^*) + N_{\Omega}(x^*).
\end{equation}
Using the quotient rule for the subdifferential of $F(x)$ at $x^* \neq 0$ yields
\begin{equation}\label{eq:stat_ratio_expanded}
\partial F(x^*) = \frac{1}{\|x^*\|_p^q} \partial \|x^*\|_1 - q \frac{\|x^*\|_1}{\|x^*\|_p^{q+1}} \nabla \|x^*\|_p.
\end{equation}
Substituting into~\eqref{eq:stat_ratio_phi} gives
\begin{equation}\label{eq:stat_ratio_full}
0 \in \phi'\bigl(F(x^*)\bigr) \left[ \frac{1}{\|x^*\|_p^q} \partial \|x^*\|_1 - q \frac{\|x^*\|_1}{\|x^*\|_p^{q+1}} \nabla \|x^*\|_p \right] + N_{\Omega}(x^*).
\end{equation}

 Define the positive scalar
\begin{equation}
\kappa := \phi'\bigl(F(x^*)\bigr) / \|x^*\|_p^q > 0.
\end{equation}
Then~\eqref{eq:stat_ratio_full} can be rewritten as
\begin{equation}\label{eq:stat_final_step}
0 \in \partial \|x^*\|_1 - \left( q \frac{\|x^*\|_1}{\|x^*\|_p} \right) \nabla \|x^*\|_p + \frac{1}{\kappa} N_{\Omega}(x^*).
\end{equation}
Since  $\frac{1}{\kappa} N_{\Omega}(x^*) = N_{\Omega}(x^*)$  for any $\kappa > 0$, 
we have 
$
\alpha = q \frac{\|x^*\|_1}{\|x^*\|_p}
$
transforms~\eqref{eq:stat_final_step} exactly into~\eqref{eq:stat_diff}.

Conversely, starting from~\eqref{eq:stat_diff} with this choice of $\alpha$ and reversing the steps (dividing by the positive factor $\phi'(F(x^*)) / \|x^*\|_p^q$) yields~\eqref{eq:stat_ratio_phi}. This establishes the equivalence.
\end{proof}

\section{ RIP Analysis}
\label{sec:rip}

In this section, we investigate fundamental properties of the proposed $\ell_1/\ell_p^q$ model. We establish recovery guarantees under the restricted isometry property (RIP).

The RIP is a cornerstone framework in compressed sensing. A measurement matrix $ {A} \in \mathbb{R}^{m \times n}$ is said to satisfy the RIP of order $k$ if there exists a constant $\delta_k \in (0,1)$, known as the restricted isometry constant (RIC), such that
\begin{equation}\label{rip}
    (1 - \delta_k) \|x\|_2^2 \leq \| {A}x\|_2^2 \leq (1 + \delta_k) \|x\|_2^2
\end{equation}
holds for any $k$-sparse vector $x \in \mathbb{R}^n$ (i.e., $\|x\|_0 \leq k$). The constant $\delta_k$ is the smallest value satisfying \eqref{rip} for a given order $k$.

To derive our recovery guarantee, we first introduce some notation. For a vector $x \in \mathbb{R}^n$ and an index set $T \subset \{1, \dots, n\}$, $x_T$ denotes the vector obtained by keeping the entries of $x$ indexed by $T$ and setting others to zero. The complement is $T$. Furthermore, $x_{(k)}$ denotes the vector obtained by keeping only the $k$ largest entries (in absolute value) of $x$ and setting the rest to zero, and $x_{(-k)} = x - x_{(k)}$.

We then present the main recovery guarantee.

\begin{theorem}\label{thm:rip}
Let $b = Ax^* + e$, where $x^*$ is a $k$-sparse signal and $\|e\|_2 \leq \varepsilon$ for some $\varepsilon > 0$. For $p > 1$, assume $\gamma \geq \frac{\|x^*\|_1}{\|x^*\|_p}$, and the measurement matrix $ {A}$ satisfies the RIP \eqref{rip} with a restricted isometry constant $\delta_{2k}$ that obeys
\begin{equation}
  \delta_{2k} < \hat\delta := \frac{1}{\sqrt{1 + k^{\frac{2}{\min\{p, 2\}}-2} \left( q\gamma t_0 + c_0  \right)^2}},
  \label{eq:RIPdelta}
\end{equation}
where $c_0 := q\gamma + k^{1-\frac{1}{p}}$ and  $t_0$ is the unique positive root from Lemma~\ref{lemma:2}. Then, any solution $\bar{x}$ of the noise-aware formulation for \eqref{prob.proposed} satisfies the error bound
\begin{equation}
  \|\bar{x} - x^*\|_2 \leq C \varepsilon,
  \label{eq:error_bound}
\end{equation}
where the constant $C$ is given by
\begin{equation}
  \begin{aligned}
    C &= \frac{2 \sqrt{1 + \delta_{2k}}}{1 -  \frac{\delta_{2k}}{\hat\delta } }   \times \left(1 + k^{\frac{1}{\min\{p, 2\}}} k^{-\frac{1}{2p} - \frac{1}{2}}  c_0^{\frac{1}{2}} \left( 1 + \frac{q\gamma t_0}{2   c_0  } \right) \right).
  \end{aligned}
  \label{eq:RIPC}
\end{equation}
 
\end{theorem}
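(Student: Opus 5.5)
Let $h := \bar{x} - x^*$, let $T = \operatorname{supp}(x^*)$ with $|T|\le k$, and take the noise-aware feasible set $\Omega=\{x:\|Ax-b\|_2\le\varepsilon\}$. Both $x^*$ and $\bar x$ lie in $\Omega$, so $\|Ah\|_2\le 2\varepsilon$. Optimality of $\bar x$ gives
\[
\frac{\|\bar x\|_1}{\|\bar x\|_p^q}\le \frac{\|x^*\|_1}{\|x^*\|_p^q}.
\]
The proof turns this ratio inequality into a cone-type inequality on $h$ and then runs the standard Cand\`es-style RIP argument.

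\textbf{Step 1: cone condition.} I would rewrite optimality as
\[
\|\bar x\|_1\le \frac{\|x^*\|_1}{\|x^*\|_p}\,\|x^*\|_p^{1-q}\|\bar x\|_p^{q}.
\]
Set $t=\|\bar x\|_p/\|x^*\|_p$. Concavity of $s\mapsto s^q$ gives $t^q\le 1+q(t-1)$, so
\[
\|\bar x\|_1-\|x^*\|_1\le q\gamma\bigl(\|\bar x\|_p-\|x^*\|_p\bigr)\le q\gamma\|h\|_p .
\]
Splitting $\|\bar x\|_1\ge \|x^*\|_1-\|h_T\|_1+\|h_{T^c}\|_1$ then yields
\[
\|h_{T^c}\|_1\le \|h_T\|_1+q\gamma\|h\|_p\le \|h_T\|_1+q\gamma\bigl(\|h_T\|_p+\|h_{T^c}\|_p\bigr).
\]
Two estimates make this closed. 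On the support, $\|h_T\|_p\le\|h_T\|_1$ and $\|h_T\|_1\le k^{1-1/p}\|h_T\|_p$, which is where $c_0=q\gamma+k^{1-1/p}$ enters. Off the support, $\|h_{T^c}\|_p\le \|h_{T^c}\|_1^{1/p}\|h_{T^c}\|_\infty^{1-1/p}$.

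\textbf{Step 2: the constant $t_0$.} I expect Lemma~\ref{lemma:2} to supply $t_0$ as the positive root of a scalar equation. It should come from normalizing the inequality $\|h_{T^c}\|_1 \le c_0\|h_T\|_p + q\gamma\|h_{T^c}\|_p$: with $u=\|h_{T^c}\|_1/\|h_T\|_p$ and the interpolation above, one gets $u\le c_0+q\gamma\,u^{1/p}(\cdot)$, and the unique root bounds $u$. This should give roughly
\[
\|h_{T^c}\|_1\le (q\gamma t_0+c_0)\,\|h_T\|_p .
\]
This step is the main obstacle. Matching the exact form in Lemma~\ref{lemma:2}, and the half-power $c_0^{1/2}$ and factor $(1+q\gamma t_0/(2c_0))$ appearing in $C$, requires choosing the right normalization. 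A likely route is a bound such as $\sqrt{c_0(c_0+q\gamma t_0)}\le c_0+q\gamma t_0/2$, which would explain that factor.

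\textbf{Step 3: RIP argument.} Partition $T^c$ into blocks $T_1,T_2,\dots$ of size $k$ in decreasing order of magnitude, and set $T_{01}=T\cup T_1$. The usual shifting bound gives
\[
\sum_{j\ge2}\|h_{T_j}\|_2\le k^{-1/2}\|h_{T^c}\|_1 .
\]
For $p\le2$ we have $\|h_T\|_p\le k^{1/p-1/2}\|h_T\|_2$; for $p>2$, $\|h_T\|_p\le\|h_T\|_2$. Together these produce the factor $k^{1/\min\{p,2\}-1/2}$. Combining with Step 2,
\[
\sum_{j\ge2}\|h_{T_j}\|_2\le \rho\,\|h_{T_{01}}\|_2,\qquad \rho:=k^{\frac1{\min\{p,2\}}-1}(q\gamma t_0+c_0).
\]
Next I would apply the sharp Cai--Zhang type inequality
\[
|\langle Ah_{T_{01}},Ah_{T_j}\rangle|\le\delta_{2k}\|h_{T_{01}}\|_2\|h_{T_j}\|_2
\]
(or a Foucart-style square-root refinement) to obtain
\[
(1-\delta_{2k}\sqrt{1+\rho^2})\,\|h_{T_{01}}\|_2\le \sqrt{1+\delta_{2k}}\,\|Ah\|_2 ,
\]
which is exactly where $\hat\delta=1/\sqrt{1+\rho^2}$ arises. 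Finally, $\|h\|_2\le\|h_{T_{01}}\|_2+\sum_{j\ge2}\|h_{T_j}\|_2$, combined with $\|Ah\|_2\le2\varepsilon$, gives the stated $C$ after bookkeeping of the $k$-powers.
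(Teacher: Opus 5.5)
Your Step~1 is essentially the paper's Lemma~\ref{lemma:1}: the tangent-line bound $t^q\le 1+q(t-1)$ is the same as $(1+t)^q\le 1+qt$. One small slip: the intermediate bound $q\gamma(\|\bar x\|_p-\|x^*\|_p)$ is not justified when $\|\bar x\|_p<\|x^*\|_p$, because there $\gamma\ge\|x^*\|_1/\|x^*\|_p$ pushes the wrong way. The conclusion $\|\bar x\|_1-\|x^*\|_1\le q\gamma\|h\|_p$ is still fine, directly from $t-1\le\|h\|_p/\|x^*\|_p$.

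\textbf{Step 2: the split along $T$ does not close.} You keep splitting along $T=\operatorname{supp}(x^*)$. Your interpolation $\|h_{T^c}\|_p\le\|h_{T^c}\|_1^{1/p}\|h_{T^c}\|_\infty^{1-1/p}$ only closes if $\|h_{T^c}\|_\infty$ is controlled by $\|h_T\|_p$, and nothing provides that. Your cone inequality does not even exclude $h_T=0\neq h_{T^c}$ when $q\gamma\ge 1$ (take $h_{T^c}$ $1$-sparse), so $u=\|h_{T^c}\|_1/\|h_T\|_p$ is not bounded.

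The missing move is to pass to the $k$ largest entries of $h$, which is why Lemma~\ref{lemma:1} is stated for $h_{(k)},h_{(-k)}$. Use $\|h_{(-k)}\|_1\le\|h_{T^c}\|_1$ and $\|h_T\|_1\le\|h_{(k)}\|_1$; then $\|h_{(-k)}\|_\infty\le k^{-1/p}\|h_{(k)}\|_p$ holds automatically. Set $s=\|h_{(-k)}\|_p/\|h_{(k)}\|_p$. The estimate $\|h_{(-k)}\|_p^p\le\|h_{(-k)}\|_\infty^{p-1}\|h_{(-k)}\|_1$ together with Lemma~\ref{lemma:1} gives $s^p\le k^{-(p-1)/p}(c_0+q\gamma s)$. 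This is exactly $f(s)\le 0$ for the $f$ of Lemma~\ref{lemma:2}, so $s\le t_0$ and $\|h_{(-k)}\|_1\le(c_0+q\gamma t_0)\|h_{(k)}\|_p$.

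The same $\ell_\infty$ bound produces the second factor of $C$. Indeed, $\|h_{(-k)}\|_2\le(\|h_{(-k)}\|_1\|h_{(-k)}\|_\infty)^{1/2}\le k^{-1/(2p)}(c_0+q\gamma t_0)^{1/2}\|h_{(k)}\|_p$. Then apply your AM--GM step and $\|h_{(k)}\|_p\le k^{1/\min\{p,2\}-1/2}\|h_{(k)}\|_2$. Your closing block-sum estimate instead gives the factor $1+\rho$, which is larger than the one in $C$.

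\textbf{Step 3: the block argument cannot give $\hat\delta$.} The bound $|\langle Ah_{T_{01}},Ah_{T_j}\rangle|\le\delta_{2k}\|h_{T_{01}}\|_2\|h_{T_j}\|_2$ is false as written: $T_{01}\cup T_j$ has $3k$ elements, so it needs $\delta_{3k}$. Splitting $T_{01}=T\cup T_1$ costs a factor $\sqrt2$. The resulting Cand\`es-type argument then only gives $\delta_{2k}<1/(1+\sqrt2\rho)$, which is strictly below $1/\sqrt{1+\rho^2}$. The Foucart-type block refinements you allude to fall short of it as well.

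What is needed is the Cai--Zhang sparse representation of a polytope. This is the mechanism of \cite[Theorem~1]{RN547}, which the paper simply reruns with Lemmas~\ref{lemma:1} and~\ref{lemma:2}. The argument goes as follows.
\begin{enumerate}
\item You have $\|h_{(-k)}\|_\infty\le\|h_{(k)}\|_2/\sqrt k$, $\|h_{(-k)}\|_1\le\rho\sqrt k\,\|h_{(k)}\|_2$, and $\rho\ge1$.
\item Hence $h_{(-k)}=\sum_i\lambda_iu_i$ is a convex combination of $k$-sparse vectors $u_i$ with $\|u_i\|_2\le\rho\|h_{(k)}\|_2$.
\item With $\beta_i=h_{(k)}+\mu u_i$, use the identity $\sum_i\lambda_i\|A(\sum_j\lambda_j\beta_j-\tfrac12\beta_i)\|_2^2=\tfrac14\sum_i\lambda_i\|A\beta_i\|_2^2$ together with RIP of order $2k$.
\item The choice $\mu=1/(1+\sqrt{1+\rho^2})$ gives $\|h_{(k)}\|_2\le\frac{\sqrt{1+\delta_{2k}}}{1-\delta_{2k}\sqrt{1+\rho^2}}\|Ah\|_2$. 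With $\|Ah\|_2\le2\varepsilon$, this is exactly the first factor of $C$.
\end{enumerate}

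Without the top-$k$ split and this polytope argument, your outline yields at best a more restrictive RIP condition with a different constant, not the theorem as stated.
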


This result is adapted from \cite[Theorem~1]{RN547}, with a modified auxiliary lemma.

\begin{lemma}\label{lemma:1}
Let $x \in \mathbb{R}^n$ be a $k$-sparse signal such that $\|Ax - b\|_2 \leq \varepsilon$ and $\frac{\|x\|_1}{\|x\|_p} \leq \gamma$ for some $\gamma > 0$. If $\bar{x}$ is a solution of the noise-aware formulation for \eqref{prob.proposed} and we define $h = \bar{x} - x$, then for $0 < q \leq 1$,
\begin{equation}
\|h_{(-k)}\|_1 \leq (q\gamma + k^{1-\frac{1}{p}})\|h_{(k)}\|_p + q\gamma \|h_{(-k)}\|_p.
\label{eq:lemma1_ineq}
\end{equation}
\end{lemma}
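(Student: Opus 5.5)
The plan is to compare the objective value at $\bar{x}$ with that at the feasible point $x$, and then use the usual cone-type decomposition of $h$. The one genuinely new ingredient, needed because of the exponent $q$, is a Bernoulli-type inequality. Throughout I assume $x\neq 0$, which is needed for the ratio to be defined (and is guaranteed when $b\neq0$ and $\varepsilon<\|b\|_2$).

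\textbf{Step 1 (upper bound on $\|\bar x\|_1$).} Since $\|Ax-b\|_2\le\varepsilon$, the vector $x$ is feasible for the noise-aware problem. Optimality of $\bar x$ then gives $\|\bar x\|_1/\|\bar x\|_p^q \le \|x\|_1/\|x\|_p^q$, that is,
\[
\|\bar x\|_1 \le \|x\|_1\left(\frac{\|\bar x\|_p}{\|x\|_p}\right)^q \le \|x\|_1\left(1+\frac{\|h\|_p}{\|x\|_p}\right)^q .
\]
The second inequality is the triangle inequality $\|\bar x\|_p\le\|x\|_p+\|h\|_p$. Since $0<q\le1$, Bernoulli's inequality $(1+s)^q\le 1+qs$ for $s\ge0$ applies. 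Combined with $\|x\|_1/\|x\|_p\le\gamma$, this yields
\[
\|\bar x\|_1\le \|x\|_1+q\gamma\|h\|_p .
\]
This step is the one I expect to need care. The naive route bounds the ratio by $\gamma\|x\|_p^{1-q}$ and then applies Young's inequality. That leaves a nonnegative leftover term $\gamma\|x\|_p-\|x\|_1$ that cannot be absorbed. Keeping $\|x\|_1$ explicit and linearizing the power $q$ via Bernoulli avoids this problem.

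\textbf{Step 2 (lower bound on $\|\bar x\|_1$).} Let $T=\operatorname{supp}(x)$, so $|T|\le k$. Then
\[
\|\bar x\|_1=\|x+h\|_1\ge \|x\|_1-\|h_T\|_1+\|h_{T^c}\|_1 = \|x\|_1+\|h\|_1-2\|h_T\|_1 .
\]
Because $h_{(k)}$ collects the $k$ largest entries of $h$ in absolute value, we have $\|h_T\|_1\le\|h_{(k)}\|_1$. Hence the right-hand side is at least $\|x\|_1+\|h_{(-k)}\|_1-\|h_{(k)}\|_1$.

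\textbf{Step 3 (combine).} Putting Steps 1 and 2 together and cancelling $\|x\|_1$ gives
\[
\|h_{(-k)}\|_1\le \|h_{(k)}\|_1+q\gamma\|h\|_p .
\]
Next, $h_{(k)}$ is $k$-sparse, so Hölder's inequality gives $\|h_{(k)}\|_1\le k^{1-1/p}\|h_{(k)}\|_p$. The triangle inequality gives $\|h\|_p\le\|h_{(k)}\|_p+\|h_{(-k)}\|_p$. Substituting both bounds produces exactly \eqref{eq:lemma1_ineq}, with coefficient $q\gamma+k^{1-1/p}$ on $\|h_{(k)}\|_p$ and coefficient $q\gamma$ on $\|h_{(-k)}\|_p$.
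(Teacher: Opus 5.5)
Your proposal is correct and takes essentially the same route as the paper. Both arguments compare the ratio at $\bar x$ against the feasible point $x$, lower-bound $\|\bar x\|_1$ through the support decomposition with $\|h_{\operatorname{supp}(x)}\|_1\le\|h_{(k)}\|_1$, linearize $(1+\|h\|_p/\|x\|_p)^q$ via $(1+t)^q\le 1+qt$, and finish with H\"older and the triangle inequality; the only differences are cosmetic (you isolate $\|\bar x\|_1$ on both sides before cancelling $\|x\|_1$, and you state the assumption $x\neq 0$ explicitly, which the paper uses tacitly).
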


\begin{lemma}\label{lemma:2}
For any $p > 1$, $k \geq 1$, $0 < q \leq 1$, and $\gamma > 0$, consider the function $f(t) = t^p - q\gamma k^{-\frac{p-1}{p}} t - q\gamma k^{-\frac{p-1}{p}} - 1$. Then the equation $f(t) = 0$ has a unique positive solution $t_0$ in $[0, +\infty)$. Moreover, let $t^* = k^{-\frac{1}{p}} \left(\frac{q\gamma}{p}\right)^{\frac{1}{p-1}}$, the function $f(t)$ is decreasing on $(0, t^*)$ and increasing on $(t^*, +\infty)$, and \(t_0 > (kp)^{-\frac{1}{p}} (q \gamma)^{\frac{1}{p-1}}.\)
\end{lemma}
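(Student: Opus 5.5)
The statement concerns only the scalar function $f$, so I would argue directly by elementary calculus. Throughout, abbreviate $a := q\gamma k^{-\frac{p-1}{p}} > 0$, so that $f(t) = t^p - a t - a - 1$ on $[0,+\infty)$.

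\emph{Step 1: monotonicity.} Differentiate to get $f'(t) = p t^{p-1} - a$. Since $p>1$, $f'$ is strictly increasing on $(0,+\infty)$, and it vanishes exactly at $t = (a/p)^{\frac{1}{p-1}}$. Raising $a/p = \frac{q\gamma}{p} k^{-\frac{p-1}{p}}$ to the power $\frac{1}{p-1}$ gives $(q\gamma/p)^{\frac{1}{p-1}} k^{-\frac1p}$, which is the point $t^*$ of the statement. Hence $f'<0$ on $(0,t^*)$ and $f'>0$ on $(t^*,+\infty)$, so $f$ is decreasing and then increasing as claimed.

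\emph{Step 2: existence and uniqueness of $t_0$.} We have $f(0) = -a-1 < 0$. Since $f$ decreases on $[0,t^*]$, $f(t)\le f(0)<0$ there. On $(t^*,+\infty)$ the function $f$ is strictly increasing and $f(t)\to+\infty$ because $p>1$. By the intermediate value theorem there is exactly one root $t_0$, and it satisfies $t_0>t^*$. It also follows that $f<0$ on $[0,t_0)$ and $f>0$ on $(t_0,+\infty)$.

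\emph{Step 3: the lower bound.} This is the only nontrivial step. The bound $t_0>t^*$ is too weak for it, because $p^{-\frac{1}{p-1}} < p^{-\frac1p}$, so $s := (kp)^{-\frac1p}(q\gamma)^{\frac{1}{p-1}}$ exceeds $t^*$. By the sign pattern in Step 2, it suffices to show $f(s)<0$. A direct computation gives
\[
s^p = \frac{(q\gamma)^{\frac{p}{p-1}}}{kp}, \qquad a s = \frac{(q\gamma)^{\frac{p}{p-1}}}{k}\, p^{-\frac1p},
\]
so that
\[
f(s) = \frac{(q\gamma)^{\frac{p}{p-1}}}{k}\Bigl(\frac1p - p^{-\frac1p}\Bigr) - a - 1 .
\]
Since $p>1$ and $\frac1p<1$, we have $p^{-\frac1p} > p^{-1}$. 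Hence the bracket is negative, and $f(s) < -a-1 < 0$. Because $s>0$ and $f(s)<0$, Step 2 gives $s<t_0$, which is the desired bound $t_0 > (kp)^{-\frac1p}(q\gamma)^{\frac{1}{p-1}}$.
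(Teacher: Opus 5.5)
Your proof is correct. Steps 1--2 match the paper's argument: the paper phrases the monotonicity through convexity, $f''>0$, which is the same as your observation that $f'$ is strictly increasing.

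Step 3 is where you genuinely depart from the paper, and your route is the one that works. The paper tries to get the lower bound from the chain $t_0>t^*>s$, where $s=(kp)^{-1/p}(q\gamma)^{1/(p-1)}$. It relies on the claimed identity $t^* = s\cdot p^{1/(p-1)}$. That identity is false. In fact $t^* = s\,p^{-\frac{1}{p(p-1)}} < s$, exactly as you noticed.

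A concrete case: for $p=2$, $q=1$, $\gamma=\sqrt{k}$ one gets $t^*=1/2$, $s=1/\sqrt{2}$ and $t_0=2$. So $t^*<s$, and the paper's last step does not establish the claim, even though the claim itself is true.

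Your fix is to evaluate $f(s)=\frac{(q\gamma)^{p/(p-1)}}{k}\bigl(\frac1p-p^{-1/p}\bigr)-a-1<0$ and then use the sign pattern of $f$ from Step 2. This costs one extra line of computation and supplies a valid proof of the final inequality in the lemma.
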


\begin{proof}[Proof of Theorem~\ref{thm:rip}]
We follow the proof of \cite[Theorem~1]{RN547}, 
replacing \cite[Lemma~3]{RN547} and \cite[Lemma~4]{RN547} 
with Lemma~\ref{lemma:1} and Lemma~\ref{lemma:2}, respectively. 
The former introduces an additional factor $q$ via the inequality $(1+t)^q \le 1+qt$ for \( t \ge 0 \) and \( 0 < q \le 1 \), while the latter corresponds to a direct rescaling of $\gamma$. As a result, $\gamma$ is replaced by $q\gamma$ in the constant $C$, yielding the desired conclusion.
\end{proof}

In the following discussion, we adopt the canonical choice $\gamma = k^{1-\frac{1}{p}}$, which allows us to express the sufficient recovery condition explicitly in terms of $(p,q,k)$. Under this parameterization, the equation governing $t_0$ simplifies such that $t_0$ depends only on $(p,q)$, decoupling from the ambient dimension $k$. 
When $1 < p \leq 2$, the upper bound $\hat{\delta}$ simplifies to
\( \hat{\delta} = \frac{1}{\sqrt{1 + (q t_0 + q + 1)^2}}, \)
which is independent of $k$. For the special case $p = 2$ and $q = 1$, we have $t_0 = 2$, leading to $\hat{\delta} = \frac{1}{\sqrt{17}}$. Conversely, when $p > 2$, the expression for $\hat{\delta}$ becomes
\( \hat{\delta} = \frac{1}{\sqrt{1 + k^{1-\frac{2}{p}} (q t_0 + q + 1)^2}}. \)
In this scenario, since the exponent $1-\frac{2}{p}$ is positive, $\hat{\delta}$ decreases as $k$ increases.

\begin{figure}[htbp!]
\centering
\subfloat[$\hat{\delta}$ vs $k$]{
\includegraphics[width=0.47\linewidth]{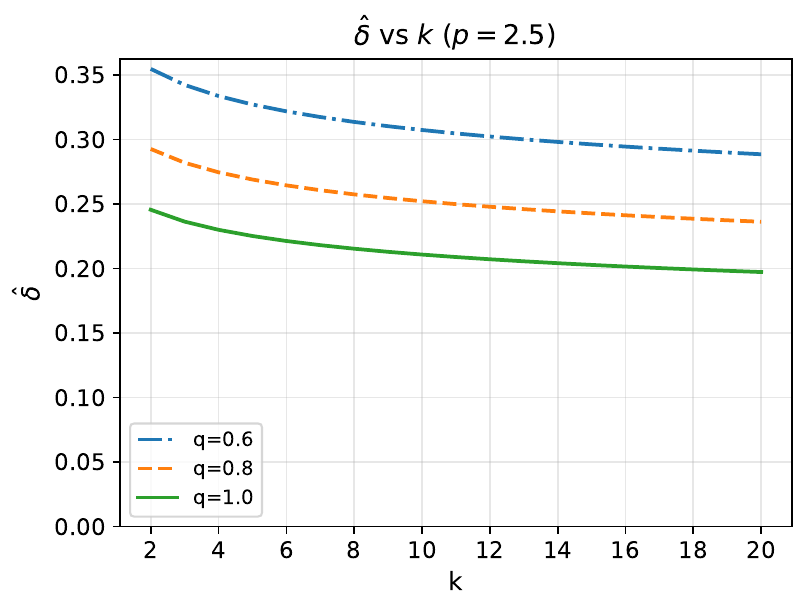}
\label{fig:delta_vs_k}
}
\hfill
\subfloat[$C$ vs $k$]{
\includegraphics[width=0.47\linewidth]{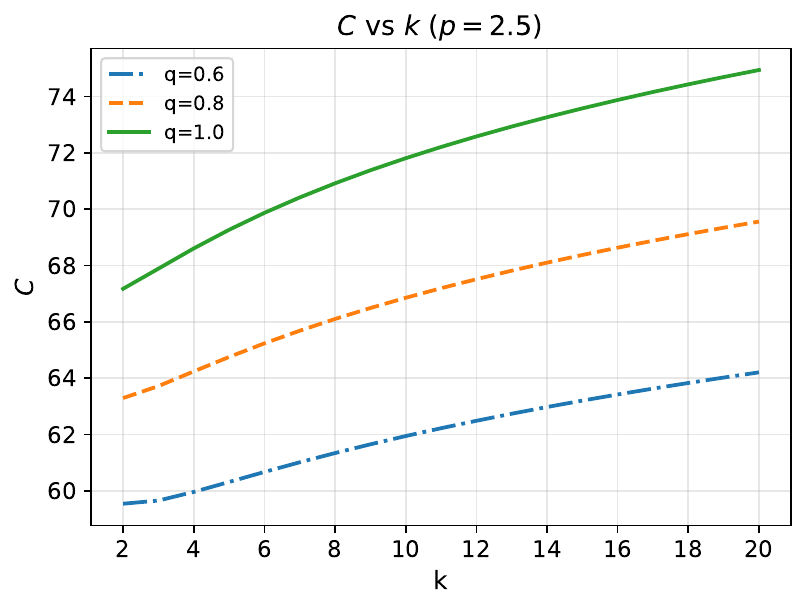}
\label{fig:C_vs_k}
}
\caption{Behavior of $\hat{\delta}$ (and implicitly $\delta_{2k}$) and the bound $C$ as functions of $q$ and $k$, with $p=2.5$ and $\gamma = k^{3/5}$ fixed.}
\label{fig:delta_and_C_trends}
\end{figure}

To isolate and clearly visualize the individual effects of the sparsity-related parameter $q$ and the dimension $k$, we fix $p=2.5$ and define $\delta_{2k} = 0.9 \hat{\delta}$.
Figure \ref{fig:delta_and_C_trends} illustrates the behavior of $\hat{\delta}$ (and implicitly $\delta_{2k}$) and the bound $C$ as functions of $q$ and $k$, with $p=2.5$ and $\gamma = k^{1-\frac{1}{p}}$ fixed. 
Figure \ref{fig:delta_vs_k} shows that $\hat{\delta}$ decreases as $k$ increases for $p > 2$. 
Since a larger $\hat{\delta}$ leads to a less restrictive RIP condition, this indicates that the recovery condition becomes more stringent as $k$ increases. Moreover, by introducing $q$, our model enlarges $\hat{\delta}$, thereby relaxing the RIP requirement.
Figure \ref{fig:C_vs_k} depicts the bound $C$ increasing with $k$, indicating that the error constant becomes larger as the sparsity $k$ increases. 
Since a smaller $C$ yields a tighter error bound, this implies degraded stability for larger $k$. Notably, the introduction of $q$ reduces $C$, leading to improved recovery accuracy.


Compared with the standard $\ell_1/\ell_p$ model \cite{RN547}, which corresponds to the case $q=1$, our generalized $\ell_1/\ell_p^q$ model with $q \in (0,1]$ introduces an additional degree of freedom. The presence of $q$ leads to a looser required bound on $\delta_{2k}$ and a potentially smaller error constant $C$ under the same $(k,p,\gamma)$ conditions. This theoretical advantage highlights the flexibility and potential improved performance of the proposed fractional model.

\section{Proximal MM algorithm for the Log Transformation} 
\label{sec:algorithm}

\subsection{Majorization of the Log Objective}

We now focus on solving \eqref{prob.log}. Although the log-transformed formulation removes the ratio structure,
the objective function remains nonconvex, since $\ln\|x\|_1$ is concave
and $-\ln\|x\|_p$ introduces additional nonlinearity.
To handle this difficulty in a principled manner, we adopt a
majorization--minimization (MM) framework. 

The key idea is to construct, at each iteration, a convex surrogate
function that globally upper-bounds the original objective and
coincides with it at the current iterate. Minimizing this surrogate
guarantees descent of the true objective.

The objective consists of two parts:
\[
F(x) = f(x) + g(x),
\quad
f(x) := \ln\|x\|_1,
\quad
g(x) := -q \ln\|x\|_p.
\]

Since $\ln(\cdot)$ is concave, $f(x)$ admits the global upper bound
\begin{equation}\label{eq:l1_major}
\ln\|x\|_1
\le
\ln\|x^{k}\|_1
+
\sum_{i=1}^n
\frac{|x_i|-|x_i^{k}|}{\|x^{k}\|_1}.
\end{equation}
Dropping constants yields a weighted $\ell_1$ term.

The gradient of \(g(x)\) is Lipschitz continuous on the considered region if this region is bounded away from the origin.
Therefore, there exists $L>0$ such that
\begin{equation}\label{eq:lipschitz}
g(x) \le g(x^{k}) + \langle \nabla g(x^{k}), x-x^{k} \rangle
+ \frac{L}{2}\|x-x^{k}\|_2^2.
\end{equation}

\subsection{Proximal MM Subproblem}

We formulate the considered problem as the following composite optimization problem
\begin{equation}\label{prob.general} 
\min_{x\in \Omega} 
F(x) := f(h(x)) + g(x),
\end{equation}
where the following assumptions hold:

\begin{itemize}
\item[(A0)] $\Omega$ is a closed convex set; 
\item[(A1)] $f:\mathbb{R}  \to (-\infty,+\infty]$ is proper, closed and concave;
\item[(A2)] $h: \mathbb{R}^n \to \mathbb{R}$ is proper, closed and convex; 
\item[(A3)] $g:\mathbb{R}^n \to \mathbb{R}$ is continuously differentiable;
\item[(A4)] $\nabla g$ is $L$-Lipschitz continuous, i.e.,
\[
\|\nabla g(x)-\nabla g(y)\|
\le L\|x-y\|, \quad \forall x,y.\]
\end{itemize}
The minimizer $x^*$ of \eqref{prob.general} satisfies the first-order optimality condition 
\begin{equation} \label{optimal.1} 
0 \in f'(h(x^*)) \partial h(x^*) + \nabla g(x^*) + N_\Omega(x^*).
\end{equation} 

Let $\beta > L$ be a fixed constant.  
Define the quadratic majorization function
\begin{equation}
\begin{aligned} 
& Q_\beta(x;x^{k})   =  F(x^{k}) + f'(h(x^{k}))( h(x) - h(x^{k}) )  + \langle \nabla g(x^{k}), x-x^{k}\rangle
+
\frac{\beta}{2}\|x-x^{k}\|^2.
\end{aligned} 
\end{equation}
The next iterate $x^{k+1}$ is then computed as the minimizer of $Q_\beta(x;x^{k})$ on $\Omega$: 
\[\min_x \  Q_\beta(x;x^{k}) \quad \text{s.t. } x\in \Omega.\]

 The proposed algorithm for solving \eqref{prob.general} is summarized in  Algorithm~\ref{alg.1}. 
 
\begin{algorithm}[H]
\caption{Proximal MM for Log-Ratio Minimization}\label{alg.1} 
\begin{algorithmic}[1]
\STATE Initialize $x^{(0)} \neq 0$ such that $ x^{(0)} \in \Omega$
\FOR{$k=0,1,2,\dots$}
\STATE Solve subproblem  for  $x^{k+1}$
$$x^{k+1} \gets \arg\min_{x\in\Omega} \  Q_\beta(x;x^{k}). $$
  \ENDFOR
\end{algorithmic}
\end{algorithm}

\subsection{Implementation} 

For solving the logarithm-transferred problem \eqref{prob.log}, the corresponding subproblem is:
\begin{equation}\label{eq:subproblem}
\begin{aligned}
\min_x& \ 
  Q_\beta(x;x^{k}) = 
\frac{1}{\|x^{k}\|_1}\sum_{i=1}^n  |x_i|
+ \frac{\beta}{2}\|x - x^{k}\|_2^2 + C + \langle \nabla g(x^{k}), x \rangle  \\
  \text{s.t. }  &  \ Ax=b, 
\end{aligned}
\end{equation}
where the constant $C = \ln\|x^{k}\|_1- 1 + g(x^{k}) - \langle \nabla g(x^{k}),  x^{k} \rangle.$

This is a convex optimization problem with a weighted $\ell_1$ objective
and a proximal quadratic term, which can be efficiently solved by
commercial solvers such as Gurobi \cite{gurobi}. 
Since 
\[ 
\nabla g(x^{k}) = -\frac{q}{\|x^{k}\|_p} \nabla \| x^{k}\|_p,
\]
by omitting the constant, the subproblem $Q$ can be equivalently scaled to  
\[\begin{aligned}
 \tilde Q(x;x^{k}) =  \ \|x\|_1
- q \frac{\|x^{k}\|_1}{\|x^{k}\|_p} \langle \nabla \| x^{k}\|_p, x\rangle  + \frac{\beta\|x^{k}\|_1}{2}\|x - x^{k}\|_2^2.
 \end{aligned}
\]

The ADMM for solving the $\ell_1/\ell_p$ problem in \cite{RN547} can be viewed as 
a special case of our algorithm with $q \equiv 1$ and 
a dynamically changing proximal parameter $\beta\|x^{k}\|_1$ instead of a fixed $\beta$. 
However,  the analysis in \cite{RN547} cannot be applied here to our proposed algorithm.

\section{Convergence Analysis}
\label{sec:convergence}
 
 We now analyze the convergence properties of the proposed proximal MM  Algorithm~\ref{alg.1} for solving \eqref{prob.general}.
We begin by establishing a monotonic descent property of 
objective, which follows directly from the majorization structure of the
surrogate function.

 We first  establish the fundamental descent property of the algorithm.

\begin{theorem}[Sufficient Decrease Property]\label{thm.decent} 
Let $\{x^{k}\}$ be generated by the proximal MM algorithm with $\beta > L$. Then for all $k$,
\begin{equation}
F(x^{k+1}) \le F(x^{k}) - \frac{\beta-L}{2} \|x^{k+1} - x^{k}\|^2.
\end{equation}
\end{theorem}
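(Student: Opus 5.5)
We would combine two one-sided bounds: one comparing $F(x^{k+1})$ with the surrogate $Q_\beta(x^{k+1};x^k)$, and one comparing that surrogate value with $F(x^k)$, exploiting the minimality of $x^{k+1}$ as far as it allows.

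\textbf{Step 1 (majorization with a slack term).} Since $f$ is concave by (A1), for any $s$ in its domain we have
\[
f(s)\le f(h(x^k))+f'(h(x^k))\bigl(s-h(x^k)\bigr).
\]
Taking $s=h(x)$ gives the same bound used in \eqref{eq:l1_major}. By (A4) and the descent lemma \eqref{eq:lipschitz},
\[
g(x)\le g(x^k)+\langle\nabla g(x^k),x-x^k\rangle+\tfrac{L}{2}\|x-x^k\|^2.
\]
Adding the two inequalities and using $F(x^k)=f(h(x^k))+g(x^k)$, we obtain for every $x$
\[
F(x)\le Q_\beta(x;x^k)-\tfrac{\beta-L}{2}\|x-x^k\|^2 .
\]
We apply this at $x=x^{k+1}$.

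\textbf{Step 2 (optimality of $x^{k+1}$).} The subproblem is minimized over $\Omega$, and $x^k\in\Omega$ for all $k$: for $k=0$ by initialization, and for $k\ge1$ because every iterate is a minimizer over $\Omega$. Hence
\[
Q_\beta(x^{k+1};x^k)\le Q_\beta(x^k;x^k)=F(x^k),
\]
where the equality holds because all linear and quadratic terms vanish at $x=x^k$. Combining with Step 1 yields
\[
F(x^{k+1})\le F(x^k)-\tfrac{\beta-L}{2}\|x^{k+1}-x^k\|^2,
\]
which is the claim as stated.

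\textbf{Possible refinement.} One could try to improve the constant, since $Q_\beta(\cdot;x^k)$ is $\beta$-strongly convex when $f'(h(x^k))\ge0$ (as holds in the log case, where $f'(t)=1/t>0$). Optimality of $x^{k+1}$ then gives
\[
Q_\beta(x^{k+1};x^k)\le F(x^k)-\tfrac{\beta}{2}\|x^{k+1}-x^k\|^2,
\]
which would lead to a stronger decrease coefficient. This refinement is not needed, so we would use the simpler argument above.

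\textbf{Main obstacle.} The delicate point is the global validity of the inequalities in Step 1 for the concrete log model. The function $g=-q\ln\|\cdot\|_p$ is not globally Lipschitz-smooth near the origin, so (A4) should really be read on a region bounded away from $0$ containing the iterates. Similarly, $f'(h(x^k))$ must be well defined, which requires $h(x^k)=\|x^k\|_1>0$. We would handle this by invoking the standing assumptions (A3)–(A4) as stated, and for the log model by noting that $b\neq0$ forces every feasible $x\neq0$. We would then restrict to a sublevel-set neighborhood on which $L$ is valid, remarking that the descent itself keeps the iterates in the sublevel set $\{F\le F(x^0)\}$.
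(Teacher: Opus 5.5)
Your proof is correct and is essentially the paper's own argument. Concavity of $f$ plus the descent lemma for $g$ give $F(x)\le Q_\beta(x;x^k)-\frac{\beta-L}{2}\|x-x^k\|^2$, and minimality of $x^{k+1}$ over $\Omega$ gives $Q_\beta(x^{k+1};x^k)\le Q_\beta(x^k;x^k)=F(x^k)$, which uses $x^k\in\Omega$ (you rightly make this explicit).
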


\begin{proof}
By the concavity of $f$, we have for any $x \in \Omega$:
\[
f(h(x)) \le f(h(x^{k})) + f'(h(x^{k}))(h(x) - h(x^{k})).
\]
On the other hand, since $\nabla g$ is $L$-Lipschitz continuous, the descent lemma gives:
\[
g(x) \le g(x^{k}) + \langle \nabla g(x^{k}), x - x^{k} \rangle + \frac{L}{2}\|x - x^{k}\|^2.
\]
Adding these two inequalities yields for any $x \in \Omega$:
\begin{equation}\label{eq.temp1} 
F(x) \le Q_\beta(x; x^{k}) - \frac{\beta-L}{2} \|x - x^{k}\|^2.  
\end{equation} 
By definition of $x^{k+1}$ as the minimizer of $Q_\beta(\cdot; x^{k})$ on $\Omega$, we have
\begin{equation}\label{eq.temp2} 
Q_\beta(x^{k+1}; x^{k}) \le Q_\beta(x^{k}; x^{k}) = F(x^{k}).  
\end{equation} 
Taking $x = x^{k+1}$ in \eqref{eq.temp1}  and combining with \eqref{eq.temp2} gives:
\[
F(x^{k+1})  
\le F(x^{k}) - \frac{\beta-L}{2} \|x^{k+1} - x^{k}\|^2,
\]
which completes the proof.
\end{proof}

\subsection{Global Convergence}

\begin{theorem}[Global Convergence]\label{thm.convergence} 
Assume that $F$ is bounded below. Then:

\begin{enumerate} 
\item [(i)] $\{F(x^{k})\}$ is monotonically decreasing and convergent;
\item [(ii)] $\sum_{k=0}^\infty \|x^{k+1}-x^{k}\|^2 < \infty$;
\item [(iii)] $\|x^{k+1}-x^{k}\| \to 0$;
\item [(iv)] Every cluster point of $\{x^{k}\}$ is a stationary point of \eqref{prob.general}, i.e., satisfies the optimality condition \eqref{optimal.1}.
\end{enumerate}
\end{theorem}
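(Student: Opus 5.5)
Parts (i)–(iii) follow from Theorem~\ref{thm.decent}. Since $\beta>L$, the decrease inequality shows $F(x^{k+1})\le F(x^k)$. A monotonically nonincreasing sequence that is bounded below converges to some $F^*$, which gives (i). Summing the decrease inequality from $k=0$ to $N$ and telescoping gives
\[
\frac{\beta-L}{2}\sum_{k=0}^{N}\|x^{k+1}-x^k\|^2\le F(x^0)-F(x^{N+1})\le F(x^0)-F^*.
\]
Letting $N\to\infty$ yields (ii), and (iii) follows because the terms of a convergent series tend to zero.

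For (iv), let $\bar x$ be a cluster point with $x^{k_j}\to\bar x$. By (iii), also $x^{k_j+1}\to\bar x$, and $\bar x\in\Omega$ since $\Omega$ is closed. The subproblem is convex: $f'(h(x^k))h$ is convex because $f$ is concave and nondecreasing on the relevant range (here $f=\ln$ with $f'>0$), and the remaining terms are linear plus a strongly convex quadratic. Hence $x^{k+1}$ satisfies the convex optimality condition
\[
0\in f'(h(x^k))\,\partial h(x^{k+1})+\nabla g(x^k)+\beta(x^{k+1}-x^k)+N_\Omega(x^{k+1}).
\]
So there exist $\xi^{k+1}\in\partial h(x^{k+1})$ and $\nu^{k+1}\in N_\Omega(x^{k+1})$ with
\[
f'(h(x^k))\,\xi^{k+1}+\nabla g(x^k)+\beta(x^{k+1}-x^k)+\nu^{k+1}=0 .
\]
We pass to the limit along $k_j$ term by term:
\begin{itemize}[leftmargin=*]
\item $\beta(x^{k_j+1}-x^{k_j})\to0$ by (iii).
\item $\nabla g(x^{k_j})\to\nabla g(\bar x)$ by continuity (A3).
\item $f'(h(x^{k_j}))\to f'(h(\bar x))$, since $h$ is continuous as a finite convex function and $f'$ is continuous (for $\ln$, on $\mathbb{R}_{++}$), provided $h(\bar x)>0$.
\item $\partial h$ is locally bounded as $h$ is finite convex, so $\{\xi^{k_j+1}\}$ is bounded. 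Passing to a further subsequence, $\xi^{k_j+1}\to\bar\xi$, and $\bar\xi\in\partial h(\bar x)$ by outer semicontinuity (closed graph) of the convex subdifferential.
\item Consequently $\nu^{k_j+1}$ converges, being equal to minus the sum of the other convergent terms. Its limit $\bar\nu$ lies in $N_\Omega(\bar x)$ by closedness of the graph of the normal cone of a closed convex set.
\end{itemize}
This gives $0\in f'(h(\bar x))\partial h(\bar x)+\nabla g(\bar x)+N_\Omega(\bar x)$, which is \eqref{optimal.1}.

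The main obstacle is the limit of the product $f'(h(x^k))\xi^{k+1}$: it needs both boundedness of the multiplier and that $h(\bar x)$ stays in the region where $f'$ is finite and continuous. For the log model we must rule out $\bar x=0$, i.e.\ $\|x^k\|_1\to0$. I would handle this with the constraint: when $b\neq0$, every $x\in\Omega$ satisfies $\|x\|_1\ge\|b\|_2/\|A\|$ (with $\|A\|$ the $\ell_1\to\ell_2$ operator norm), so $h$ is bounded away from zero on $\Omega$. Alternatively, the monotone decrease keeps iterates in a sublevel set, and the scale-invariant $F$ stays finite there. The same lower bound justifies the Lipschitz constant $L$ of $\nabla g$, which the paper asserts only on regions away from the origin. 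Boundedness of the normal-cone term is not needed separately, since it is determined by the equation.
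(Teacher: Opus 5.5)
Your proof is correct and follows the paper's argument essentially step for step. Parts (i)--(iii) come from telescoping the sufficient-decrease inequality, and part (iv) from passing to the limit in the subproblem's optimality condition, using local boundedness and the closed graph of $\partial h$, continuity of $\nabla g$ and $f'$, and outer semicontinuity of $N_\Omega$.

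Your bound $\|x\|_1\ge\|b\|_2/\|A\|$ on $\Omega$ is a useful addition: it justifies $h(\bar x)>0$, which the paper takes for granted. Two of your side remarks overreach, however.
\begin{itemize}
\item The scale-invariance alternative fails for $q<1$. The log objective satisfies $F(tx)=(1-q)\ln t+F(x)$, so sublevel sets do not keep the iterates away from $0$.
\item Being away from the origin does not make $\nabla g$ Lipschitz when $1<p<2$. The entries $|x_i|^{p-1}$ are not Lipschitz at $x_i=0$, so (A4) has to remain an assumption rather than a consequence.
\end{itemize}
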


\begin{proof}
From the sufficient decrease inequality with $c = \frac{\beta-L}{2} > 0$, we have
\[
F(x^{k})-F(x^{k+1}) \ge c\|x^{k+1}-x^{k}\|^2.
\]
Summing from $k=0$ to $N$ gives
\[
F(x^{(0)})-F(x^{(N+1)}) \ge c\sum_{k=0}^N \|x^{k+1}-x^{k}\|^2.
\]
Since $F$ is bounded below, letting $N\to\infty$ yields
\[
\sum_{k=0}^\infty \|x^{k+1}-x^{k}\|^2 < \infty,
\]
which implies $\|x^{k+1}-x^{k}\|\to 0$. This proves (1)-(3).

To prove (4), let $x^*$ be any cluster point of $\{x^{k}\}$, and consider a subsequence $\{x^{(k_j)}\}$ such that $x^{(k_j)} \to x^*$. From the optimality condition of the subproblem at iteration $k_j$, there exists $s_{k_j+1} \in \partial h(x^{(k_j+1)})$ such that
\begin{equation}\label{eq.temp.limit} 
\begin{aligned} 
\beta (x^{(k_j+1)}-x^{(k_j)}) - \nabla g(x^{(k_j)}) - f'(h(x^{(k_j)})) s_{k_j+1}  \; \in N_\Omega(x^{(k_j+1)}).
\end{aligned}
\end{equation} 

Since $\|x^{(k_j+1)}-x^{(k_j)}\| \to 0$, we have $x^{(k_j+1)} \to x^*$ as well. The subdifferential $\partial h$ of a closed convex function is locally bounded and has closed graph. Therefore, the sequence $\{s_{k_j+1}\}$ is bounded and, by passing to a further subsequence if necessary, we may assume that $s_{k_j+1} \to s_*$ for some $s_* \in \mathbb{R}^n$. By the closedness of the graph of $\partial h$, we have $s_* \in \partial h(x^*)$.

Taking the limit in \eqref{eq.temp.limit} as $j \to \infty$, and using the continuity of $\nabla g$, the continuity of $f'$ at $h(x^*)$ (since $f$ is concave, $f'$ is continuous wherever it exists), and the outer semicontinuity of the normal cone $N_\Omega$ (which follows from the closedness of $\Omega$), we obtain
\[
-\nabla g(x^*) - f'(h(x^*)) s_* \in N_\Omega(x^*).
\]
Since $s_* \in \partial h(x^*)$, this is equivalent to
\[
0 \in f'(h(x^*)) \partial h(x^*) + \nabla g(x^*) + N_\Omega(x^*),
\]
which is exactly the first-order optimality condition \eqref{optimal.1}. Hence $x^*$ is a stationary point of \eqref{prob.general}.
\end{proof}

\subsection{Convergence Rates under KL}

To apply the KL framework, we need to establish a relative error condition that links the distance between consecutive iterates to the size of a subgradient at the new iterate.

\begin{lemma}[Relative Error Estimate]
Assume that  $h$ is Lipschitz continuous with constant $L_h$ on $\Omega$ and $f'$ is Lipschitz continuous with constant $L_f$ for $h$ on $\Omega$. Then there exists a constant $C > 0$ such that for all $k$,
\begin{equation}
\mathrm{dist}(0, \partial F(x^{k+1})) \le C \|x^{k+1} - x^{k}\|.
\end{equation}
\label{lemma:rel_err_est}
\end{lemma}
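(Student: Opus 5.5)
The plan is to build an explicit element of $\partial F(x^{k+1})$ from the optimality condition of the subproblem at iteration $k$. Here $F$ is read as the objective of \eqref{prob.general} plus the indicator $\delta_\Omega$, so that $\partial F(x) \supseteq f'(h(x))\,\partial h(x) + \nabla g(x) + N_\Omega(x)$. As in the proof of Theorem~\ref{thm.convergence}, the optimality condition of the convex subproblem $\min_{x\in\Omega} Q_\beta(x;x^k)$ gives some $s_{k+1}\in\partial h(x^{k+1})$ and $\nu_{k+1}\in N_\Omega(x^{k+1})$ with
\[
0 = f'(h(x^{k}))\,s_{k+1} + \nabla g(x^{k}) + \beta(x^{k+1}-x^{k}) + \nu_{k+1}.
\]

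We then define
\[
w^{k+1} := f'(h(x^{k+1}))\,s_{k+1} + \nabla g(x^{k+1}) + \nu_{k+1}.
\]
This vector lies in $\partial F(x^{k+1})$. Subtracting the optimality identity from it gives
\[
w^{k+1} = \bigl(f'(h(x^{k+1}))-f'(h(x^{k}))\bigr)s_{k+1} + \bigl(\nabla g(x^{k+1})-\nabla g(x^{k})\bigr) - \beta(x^{k+1}-x^{k}).
\]

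Each term is bounded separately.
\begin{itemize}
\item The last term is bounded by $\beta\|x^{k+1}-x^k\|$.
\item The middle term is bounded by $L\|x^{k+1}-x^k\|$, using (A4).
\item For the first term, the Lipschitz assumption on $f'$ along $h$ gives $|f'(h(x^{k+1}))-f'(h(x^{k}))|\le L_f|h(x^{k+1})-h(x^k)|\le L_fL_h\|x^{k+1}-x^k\|$.
\end{itemize}

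The first term also needs a uniform bound on $\|s_{k+1}\|$. Since $h$ is convex and $L_h$-Lipschitz on $\Omega$, we would like to say that its subgradients there have norm at most $L_h$. If $h$ is Lipschitz only relative to $\Omega$, this needs a little care. In our application $h=\|\cdot\|_1$, so every $s\in\partial h$ satisfies $\|s\|_\infty\le 1$ and $\|s\|_2\le\sqrt n$. In general, we take $h$ to be globally $L_h$-Lipschitz, or use the subgradient bound of a convex Lipschitz function. Combining the three estimates gives $C=L_fL_h^2+L+\beta$, which proves the lemma, since $\mathrm{dist}(0,\partial F(x^{k+1}))\le\|w^{k+1}\|$.

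The main obstacle is justifying the subdifferential calculus, i.e. that $w^{k+1}$ really belongs to $\partial F(x^{k+1})$ for the nonsmooth, nonconvex composition $f\circ h$. For the log model, $f=\ln$ is smooth on $\mathbb{R}_{++}$ with $f'>0$, and $h(x^{k+1})>0$ along the iterates (away from the origin). Then the chain rule for a $C^1$ increasing outer function composed with a convex inner function gives $\partial(f\circ h)(x)=f'(h(x))\,\partial h(x)$ in the Clarke or limiting sense. Adding the smooth $g$ and the indicator of the convex set $\Omega$ then gives the inclusion by the standard sum rules, since the normal cone equals the subdifferential of $\delta_\Omega$. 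The same positivity of $h$ keeps $f'$ Lipschitz on the relevant range, which is what makes the constant $L_f$ available.
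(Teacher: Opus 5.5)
Your proof is correct and follows essentially the same route as the paper. Both build $w^{k+1}\in\partial F(x^{k+1})$ from the subproblem's optimality condition by replacing $f'(h(x^k))$ and $\nabla g(x^k)$ with their values at $x^{k+1}$, then bound the three difference terms. Your version is slightly cleaner in two places: bounding $\|s_{k+1}\|\le L_h$ avoids the paper's appeal to local boundedness of $\partial h$, which tacitly needs bounded iterates, and your sign on the $\beta(x^{k+1}-x^k)$ term is the correct one, whereas the paper's is flipped, though this does not affect the norm estimate.
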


\begin{proof}
From the optimality condition of the subproblem, there exists $s_{k+1} \in \partial h(x^{k+1})$ such that
\begin{equation}\label{eq:relerr_optimality}
\beta (x^{k+1} - x^{k}) - \nabla g(x^{k}) - f'(h(x^{k})) s_{k+1} \in N_\Omega(x^{k+1}).
\end{equation}
The subdifferential of $F$ at $x^{k+1}$ is given by
\[
\partial F(x^{k+1}) = f'(h(x^{k+1})) \partial h(x^{k+1}) + \nabla g(x^{k+1}) + N_\Omega(x^{k+1}).
\]
Define the vector
\[\begin{aligned} 
v_{k+1} :=  \beta (x^{k+1} - x^{k}) - \nabla g(x^{k}) - f'(h(x^{k})) s_{k+1}  + \nabla g(x^{k+1}) + f'(h(x^{k+1})) s_{k+1}.
 \end{aligned}
\]
By construction, $v_{k+1} \in \partial F(x^{k+1})$. Rearranging terms,
\[\begin{aligned} 
v_{k+1} = \beta (x^{k+1} - x^{k}) + (\nabla g(x^{k+1}) - \nabla g(x^{k}))  + (f'(h(x^{k+1})) - f'(h(x^{k}))) s_{k+1}.
 \end{aligned}
\]
Taking norms and using the Lipschitz continuity of $\nabla g$, $f'$, and $h$, as well as the boundedness of $\{s_{k+1}\}$ (which follows from the local boundedness of $\partial h$), we obtain
\[\begin{aligned} 
\|v_{k+1}\| \le  \beta \|x^{k+1} - x^{k}\| + L \|x^{k+1} - x^{k}\| + L_f L_h \|x^{k+1} - x^{k}\| \|s_{k+1}\|.
  \end{aligned}
\]
Since $\|s_{k+1}\|$ is bounded by some constant $M > 0$, we have
\[
\|v_{k+1}\| \le (\beta + L + L_f L_h M) \|x^{k+1} - x^{k}\|.
\]
Thus, $\mathrm{dist}(0, \partial F(x^{k+1})) \le \|v_{k+1}\| \le C \|x^{k+1} - x^{k}\|$ with $C = \beta + L + L_f L_h M$.
\end{proof}

With the sufficient decrease property and the relative error estimate, we can now establish the full convergence result under the KL assumption.

\begin{theorem}[Global Convergence under KL]
Assume that $F$ is bounded below and satisfies the KL property. Let $\{x^{k}\}$ be generated by Algorithm~\ref{alg.1} with $\beta > L$. Then:

\begin{enumerate}
    \item [(i)] $\sum_{k=0}^\infty \|x^{k+1} - x^{k}\| < \infty$, and consequently $\{x^{k}\}$ converges to a stationary point $x^*$ of \eqref{prob.general}.
    \item [(ii)] Let $\theta \in [0,1)$ be the KL exponent of $F$ at $x^*$. Then the following convergence rates hold:
    \begin{itemize}
        \item If $\theta = 0$, then $\{x^{k}\}$ converges in a finite number of steps.
        \item If $\theta \in (0, \frac{1}{2}]$, then there exist constants $C > 0$ and $q \in (0,1)$ such that $\|x^{k} - x^*\| \le C q^k$ (R-linear convergence) \cite{attouch2013convergence, chen2024enhancing}.
        \item If $\theta \in (\frac{1}{2}, 1)$, then there exists a constant $C > 0$ such that $\|x^{k} - x^*\| \le C k^{-\frac{1-\theta}{2\theta-1}}$ (sublinear convergence) \cite{attouch2013convergence}.
    \end{itemize}
\end{enumerate}
\end{theorem}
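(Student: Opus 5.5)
The plan is to run the standard Attouch--Bolte--Svaiter argument, using the two ingredients already established: the sufficient decrease of Theorem~\ref{thm.decent} and the relative error bound of Lemma~\ref{lemma:rel_err_est}. Together these give conditions H1--H2 of \cite{attouch2013convergence}. For H3, continuity along convergent subsequences, I will use Theorem~\ref{thm.convergence}(iv) together with the continuity of $F$ on $\Omega$ away from the origin. I will also need boundedness of $\{x^k\}$, so that a cluster point $x^*$ exists and the constant $M$ in Lemma~\ref{lemma:rel_err_est} is finite. I plan to assume this, as in the KL template of Section~\ref{sec:notation}, or to obtain it from boundedness of the level set $\{x\in\Omega: F(x)\le F(x^0)\}$. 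By Theorem~\ref{thm.convergence}(i), $F(x^k)\downarrow F^*$. Moreover $F(x^*)=F^*$, because $F$ is continuous where it is finite and $x^*\neq 0$ by the same region assumption that makes $\nabla g$ Lipschitz.

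For (i), I first dispose of the trivial case: if $F(x^{K})=F^*$ for some $K$, then decrease forces $x^{k+1}=x^k$ for all $k\ge K$. Otherwise, the KL inequality holds uniformly on a neighborhood of the compact set of cluster points, via the uniformized KL lemma of \cite{Bolte2014}. Write $r_k=F(x^k)-F^*$ and $\Delta_k=\varphi(r_k)-\varphi(r_{k+1})$. By concavity of $\varphi$, then the KL inequality, then Lemma~\ref{lemma:rel_err_est}, and finally Theorem~\ref{thm.decent}, for large $k$ we get
\[
\Delta_k\ \ge\ \varphi'(r_k)(r_k-r_{k+1})\ \ge\ \frac{c\|x^{k+1}-x^k\|^2}{C\|x^k-x^{k-1}\|},
\]
where $c=(\beta-L)/2$. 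By AM--GM this gives
\[
2\|x^{k+1}-x^k\|\le \|x^k-x^{k-1}\|+\tfrac{C}{c}\Delta_k .
\]
Summing telescopes the $\Delta_k$, so $\sum_k\|x^{k+1}-x^k\|<\infty$. Hence $\{x^k\}$ is Cauchy and converges, and by Theorem~\ref{thm.convergence}(iv) its limit is stationary.

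For (ii), set $S_k=\sum_{i\ge k}\|x^{i+1}-x^i\|\ge\|x^k-x^*\|$. The summed inequality yields $S_k\le A(S_{k-1}-S_k)+B\varphi(r_k)$. KL with $\varphi(s)=cs^{1-\theta}$ together with Lemma~\ref{lemma:rel_err_est} gives $r_k^{\theta}\lesssim S_{k-1}-S_k$, so $\varphi(r_k)\lesssim (S_{k-1}-S_k)^{(1-\theta)/\theta}$. The three regimes then come out as follows.
\begin{itemize}
\item $\theta=0$: the KL inequality reads $\operatorname{dist}(0,\partial F(x^k))\ge 1/c$ whenever $r_k>0$. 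This contradicts $\|x^k-x^{k-1}\|\to0$, so $r_k=0$ eventually, and we are in the finite-termination case.
\item $\theta\in(0,1/2]$: the exponent $(1-\theta)/\theta\ge1$, so eventually $S_k\le A'(S_{k-1}-S_k)$. This gives geometric decay of $S_k$.
\item $\theta\in(1/2,1)$: the resulting recursion $S_k^{\theta/(1-\theta)}\lesssim S_{k-1}-S_k$ is handled by the usual comparison with $t\mapsto t^{-(2\theta-1)/(1-\theta)}$. This gives $S_k=O(k^{-(1-\theta)/(2\theta-1)})$.
\end{itemize}

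The main obstacle is not the KL machinery, which is routine, but verifying the hypotheses. First, Lemma~\ref{lemma:rel_err_est} requires a uniform bound on $s_{k+1}$ and Lipschitz continuity of $f'\circ h$ along the iterates. For the log model this means $\|x^k\|_1$ and $\|x^k\|_p$ must stay bounded away from $0$. I would argue this from the monotone decrease of $F(x^k)$ together with $b\neq0$, since $Ax=b$ forces $\|x\|\ge\|b\|/\|A\|$. Second, I need $F$ to be KL on $\Omega$. This holds because $F+\delta_\Omega$ is semialgebraic after the log is handled, i.e. definable in an o-minimal structure containing $\exp$, and this is what lets the uniformized KL lemma apply.
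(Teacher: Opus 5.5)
Your proposal is correct and takes essentially the same route as the paper: sufficient decrease (Theorem~\ref{thm.decent}), the relative error bound (Lemma~\ref{lemma:rel_err_est}) and continuity of $F$ feed the Attouch--Bolte--Svaiter framework for finite length and then the standard Attouch--Bolte recursion for the three rate regimes, and where the paper only cites \cite[Theorem 2.9]{attouch2013convergence} and \cite[Theorem 2]{attouch2009convergence}, you write out the telescoping and $S_k$ recursions explicitly. You are right that boundedness of $\{x^k\}$ (so that a cluster point exists) must be added as a hypothesis, since the theorem statement omits it and the paper's proof relies on it only implicitly.
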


\begin{proof}
From Theorem~\ref{thm.decent}, we have
\[
F(x^{k}) - F(x^{k+1}) \ge \frac{\beta-L}{2} \|x^{k+1} - x^{k}\|^2.
\]
Summing from $k=0$ to $N$ and using the lower boundedness of $F$ yields
\begin{equation}\label{eq.notag1} 
\sum_{k=0}^\infty \|x^{k+1} - x^{k}\|^2 < \infty.
\end{equation} 

The relative error estimate (Lemma~\ref{lemma:rel_err_est}) gives
\begin{equation}\label{eq.notag2} 
\mathrm{dist}(0, \partial F(x^{k+1})) \le C \|x^{k+1} - x^{k}\|.
\end{equation} 

With \eqref{eq.notag1}, \eqref{eq.notag2} and the sufficient decrease property Theorem~\ref{thm.decent}, the sequence satisfies the three conditions required for the KL convergence framework \cite{attouch2009convergence}:
\begin{enumerate}
    \item[(i)] Sufficient decrease: $F(x^{k+1}) + a\|x^{k+1} - x^{k}\|^2 \le F(x^{k})$;
    \item[(ii)] Relative error: $\mathrm{dist}(0, \partial F(x^{k+1})) \le b\|x^{k+1} - x^{k}\|$;
    \item[(iii)] Continuity: $F$ is continuous on its domain.
\end{enumerate}

By the KL property and the abstract convergence theorem \cite[Theorem 2.9]{attouch2013convergence}, we obtain that $\sum_{k=0}^\infty \|x^{k+1} - x^{k}\| < \infty$, which implies that $\{x^{k}\}$ is a Cauchy sequence and hence converges to some limit $x^*$. The optimality of $x^*$ follows from taking limits as shown in  Theorem~\ref{thm.convergence}(iv).

The convergence rates follow from the KL exponent $\theta$ via the standard KL analysis \cite[Theorem 2]{attouch2009convergence}. The finite convergence for $\theta=0$ occurs because the KL inequality becomes sharp enough to force the sequence to reach the critical point exactly. For $\theta \in (0, 1/2]$, the resulting recursion yields linear convergence, while for $\theta > 1/2$, the rate becomes sublinear \cite{attouch2009convergence}.
\end{proof}

\section{Numerical experiment}
\label{sec:experiments}

In this section, we present numerical experiments to illustrate the performance of the proposed algorithm and the $\ell_1/\ell_p^q$ model. Our experiments are divided into three major parts:  
(1) \emph{Sparse Signal Recovery}, which provides controlled, synthetic scenarios to thoroughly evaluate the algorithmic behavior,
(2) \emph{Time-domain Signal Reconstruction}, which evaluates the ability of the proposed method to recover structured temporal signals and preserve their key waveform characteristics, and  
(3) \emph{MRI Reconstruction}, which serves as a more realistic imaging application demonstrating the practical value of our approach.

All experiments were performed on a Mac Mini equipped with an M4 chip, using MATLAB R2022a. The initial point for our algorithm is obtained from the solution of the $\ell_1$ minimization problem, computed using the commercial solver Gurobi, ensuring a fair and consistent initialization across all tested methods. 
Performance comparisons are conducted among the classical $\ell_1$ model, the $\ell_1 - \ell_2$ model, and the $\ell_1/\ell_2$ model, together with several variants of our proposed $\ell_1/\ell_p^q$ formulation under different choices of $p$ and $q$, where the $\ell_1/\ell_2$ model can be viewed as a special case of this formulation.

\subsection{Sparse Signal Recovery}

We first evaluate the performance of different models on synthetic sparse signal recovery problems. 
The measurements follow the standard compressed sensing model \( b = A x_{\text{true}} \), 
where $A \in \mathbb{R}^{M \times N}$ is the sensing matrix and $x_{\text{true}}$ is the ground-truth sparse signal. 

In all experiments, the sensing matrix has dimension $64 \times 512$. 
Three types of sensing matrices are considered. 
For the Gaussian case, each row of $A$ is drawn from a multivariate normal distribution $\mathcal{N}(0,\Sigma)$ with covariance matrix $\Sigma = r\mathbf{1}\mathbf{1}^\top + (1-r)I$ and $r=0.8$. 
For the oversampled DCT case, the matrix is generated as
\[
A_{ij}=\frac{1}{\sqrt{M}}\cos\!\left(\frac{2\pi w_i j}{F}\right),
\]
where $w_i \sim \mathcal{U}(0,1)$ and $F=5$ is the oversampling factor. 
For the Bernoulli case, the entries of $A$ are independently drawn from $\{\pm1\}$ with equal probability and scaled by $1/\sqrt{M}$.

The ground-truth signal $x_{\text{true}}\in\mathbb{R}^{N}$ is generated as a $K$-sparse vector whose support is randomly selected with a minimum separation of $2F$ (with $F=5$). 
Two coefficient distributions are considered. 
In the dynamic case, the nonzero entries follow $\mathrm{sign}(\mathcal{N}(0,1))\cdot 10^{u}$ with $u\sim\mathcal{U}(0,3)$, producing a large dynamic range. 
In the non-dynamic case, the nonzero entries are drawn from $\mathcal{N}(0,1)$.

\begin{figure*}[htbp!]
    \centering
    \captionsetup[subfloat]{font=small}
    
    \subfloat[Gaussian, non-dynamic]{
        \includegraphics[width=0.3\linewidth]{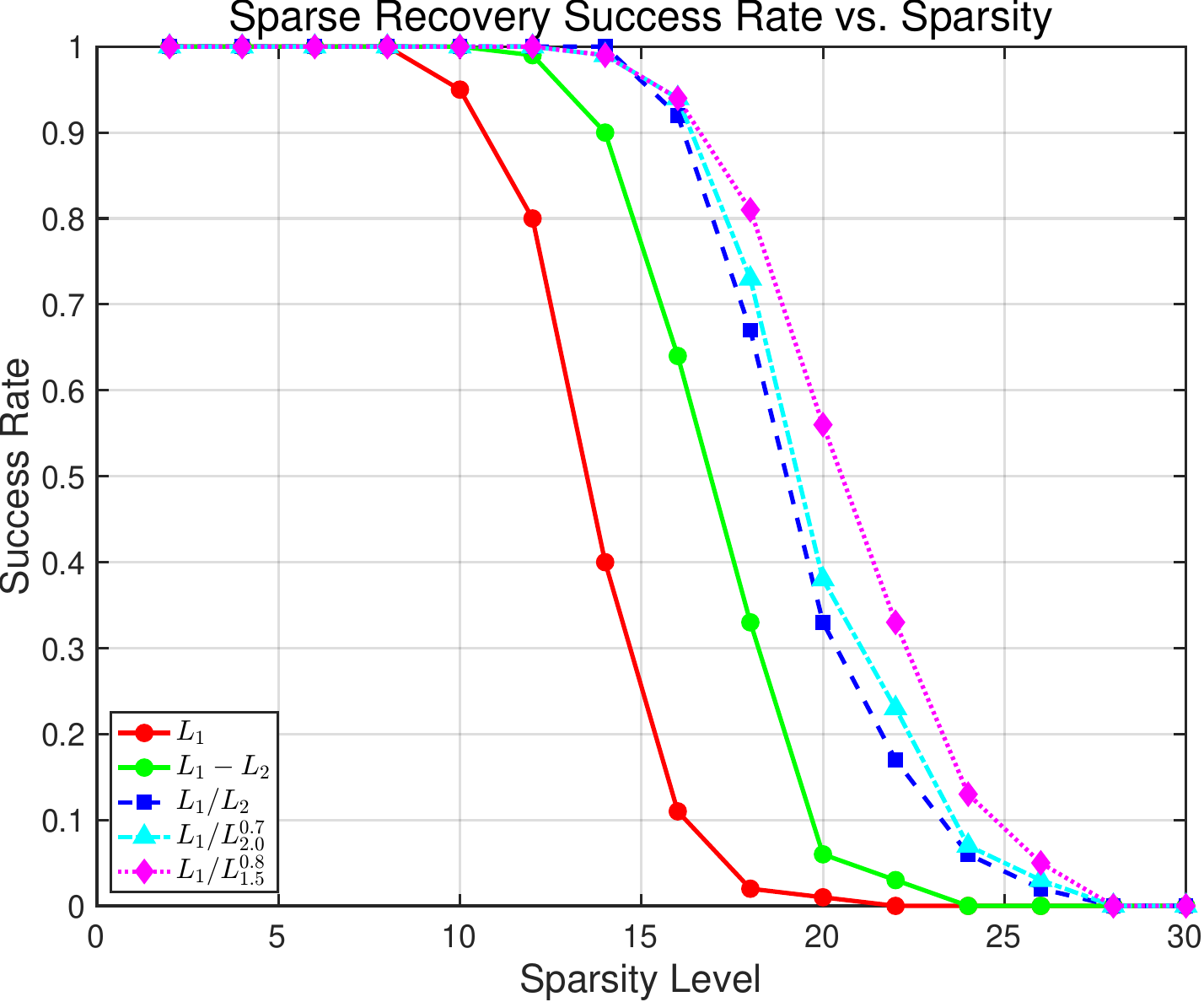}
        \label{fig:Gaussian-non-dynamic}
    }
    \hfill
    \subfloat[Gaussian, dynamic]{
        \includegraphics[width=0.3\linewidth]{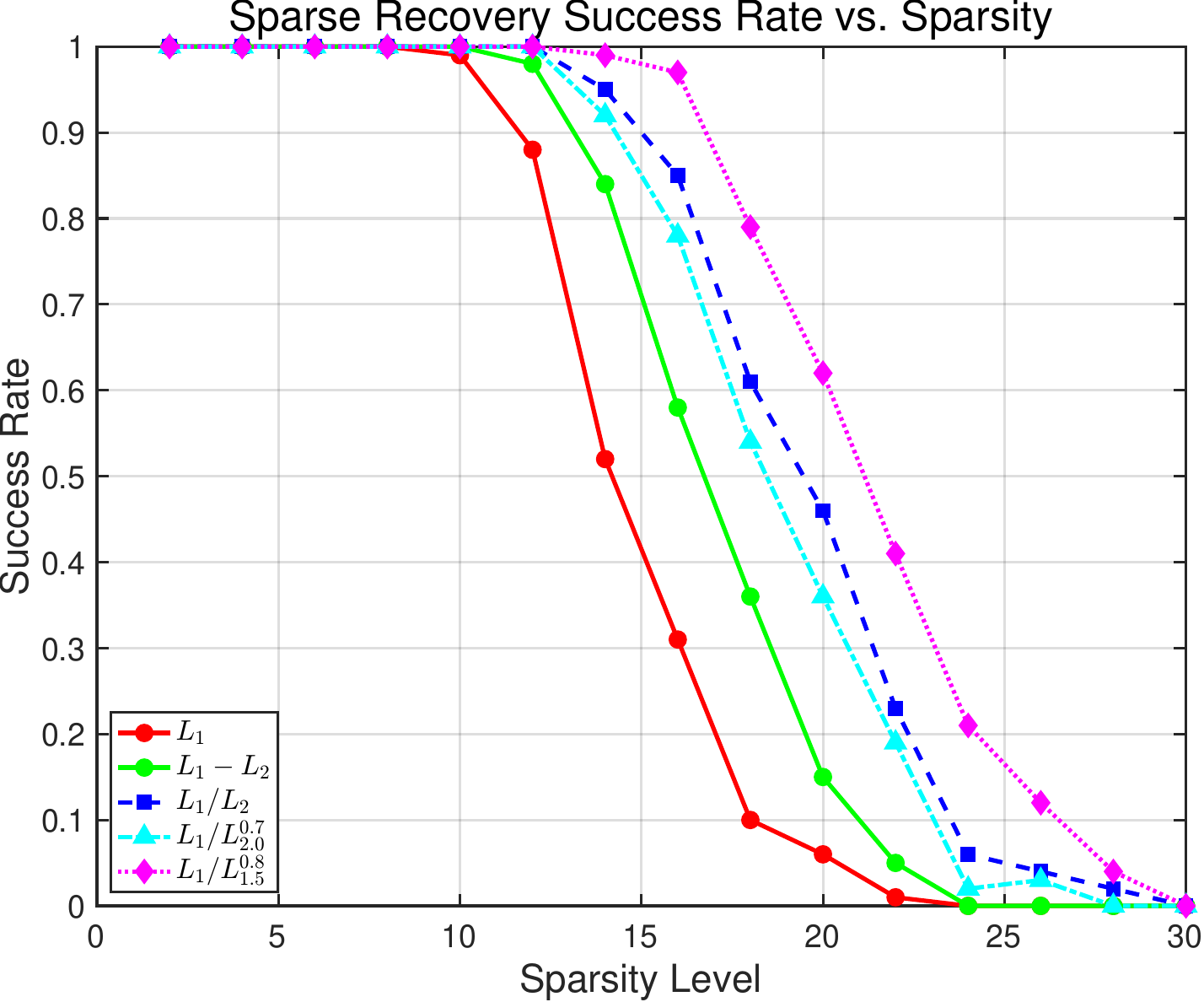}
        \label{fig:Gaussian-dynamic}
    }
    \hfill
    \subfloat[Oversampled DCT, non-dynamic]{
        \includegraphics[width=0.3\linewidth]{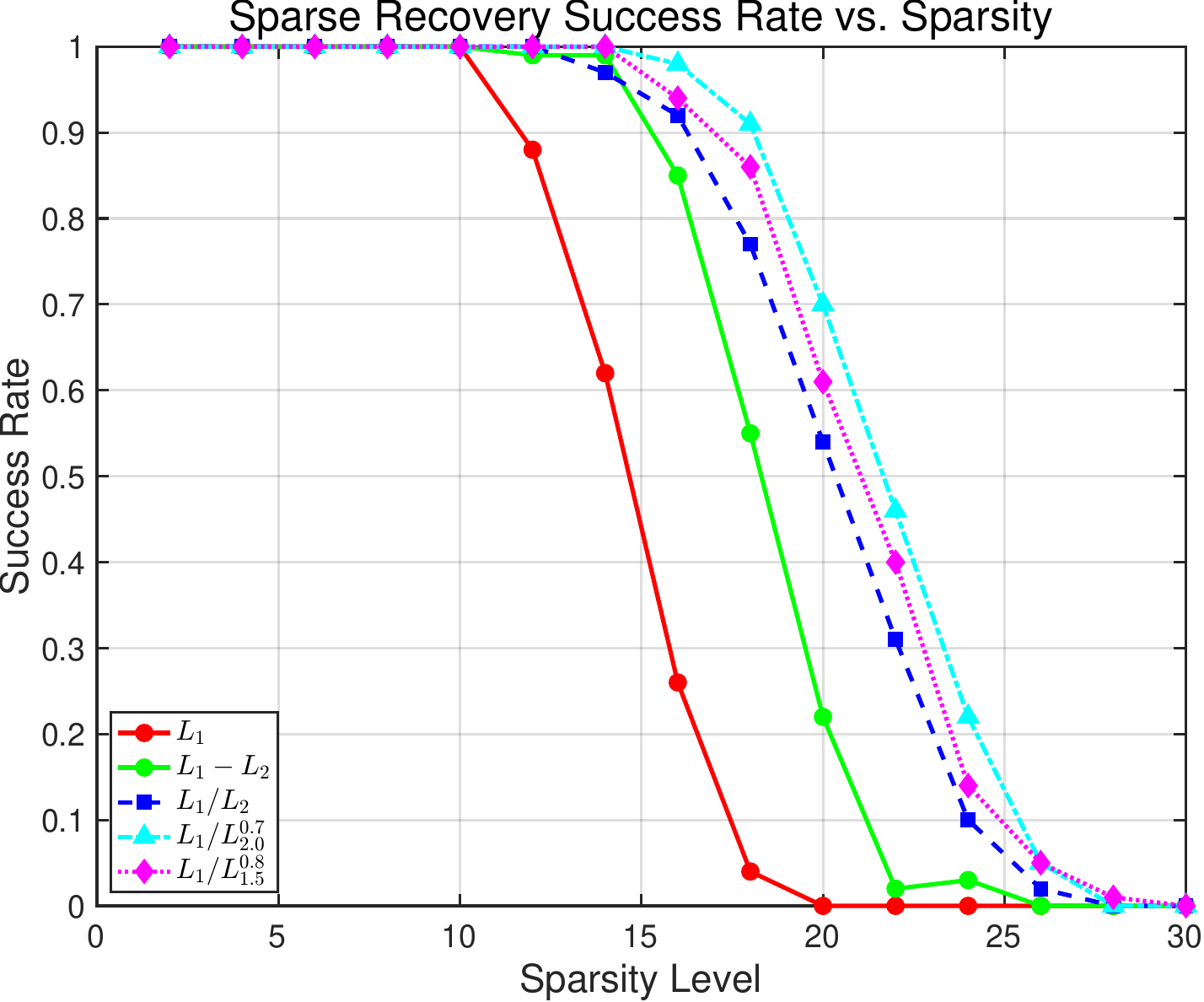}
        \label{fig:Oversample-DCT-non-dynamic}
    }

    \subfloat[Oversampled DCT, dynamic]{
        \includegraphics[width=0.3\linewidth]{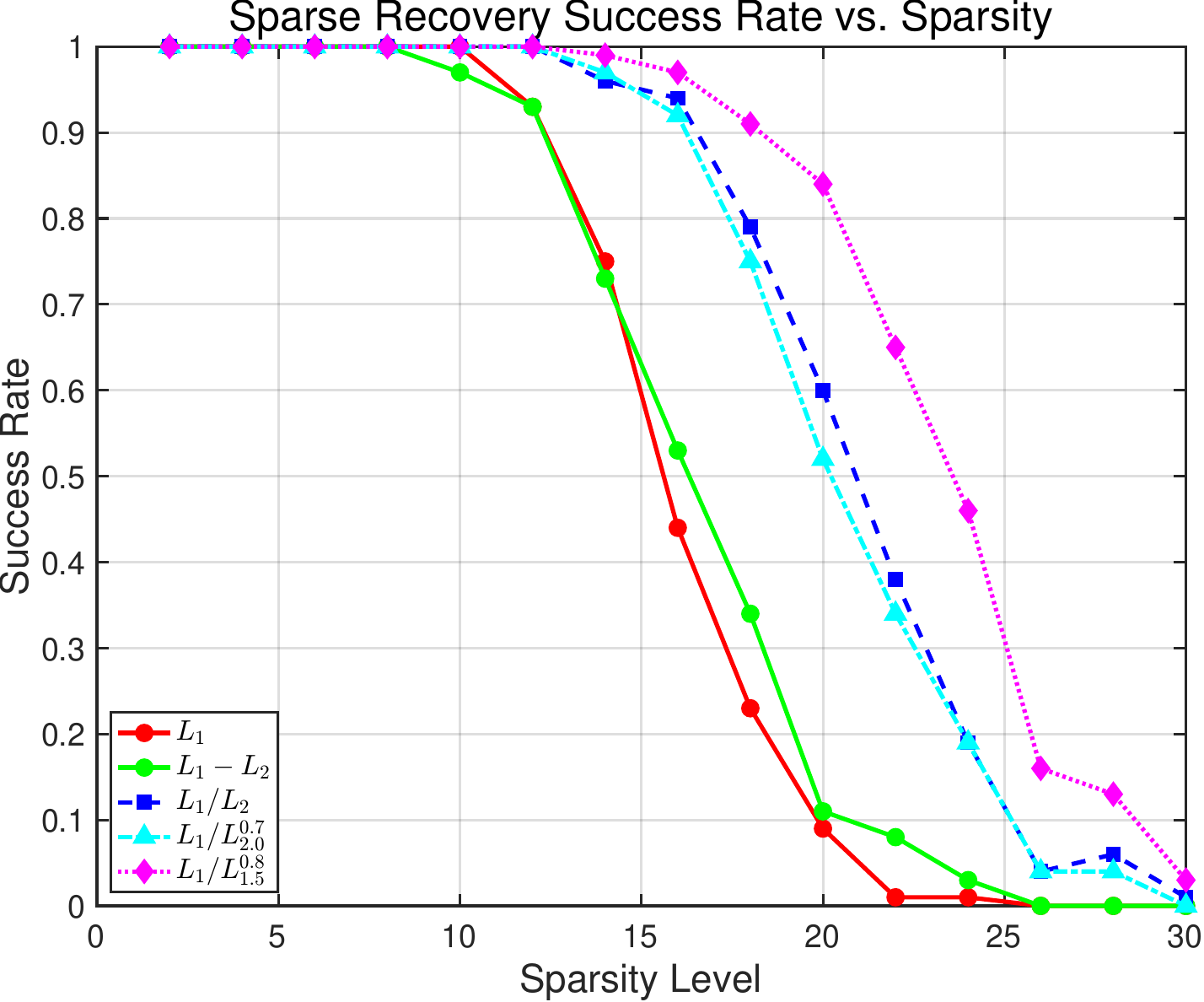}
        \label{fig:Oversample-DCT-dynamic}
    }
    \hfill
    \subfloat[Bernoulli, non-dynamic]{
        \includegraphics[width=0.3\linewidth]{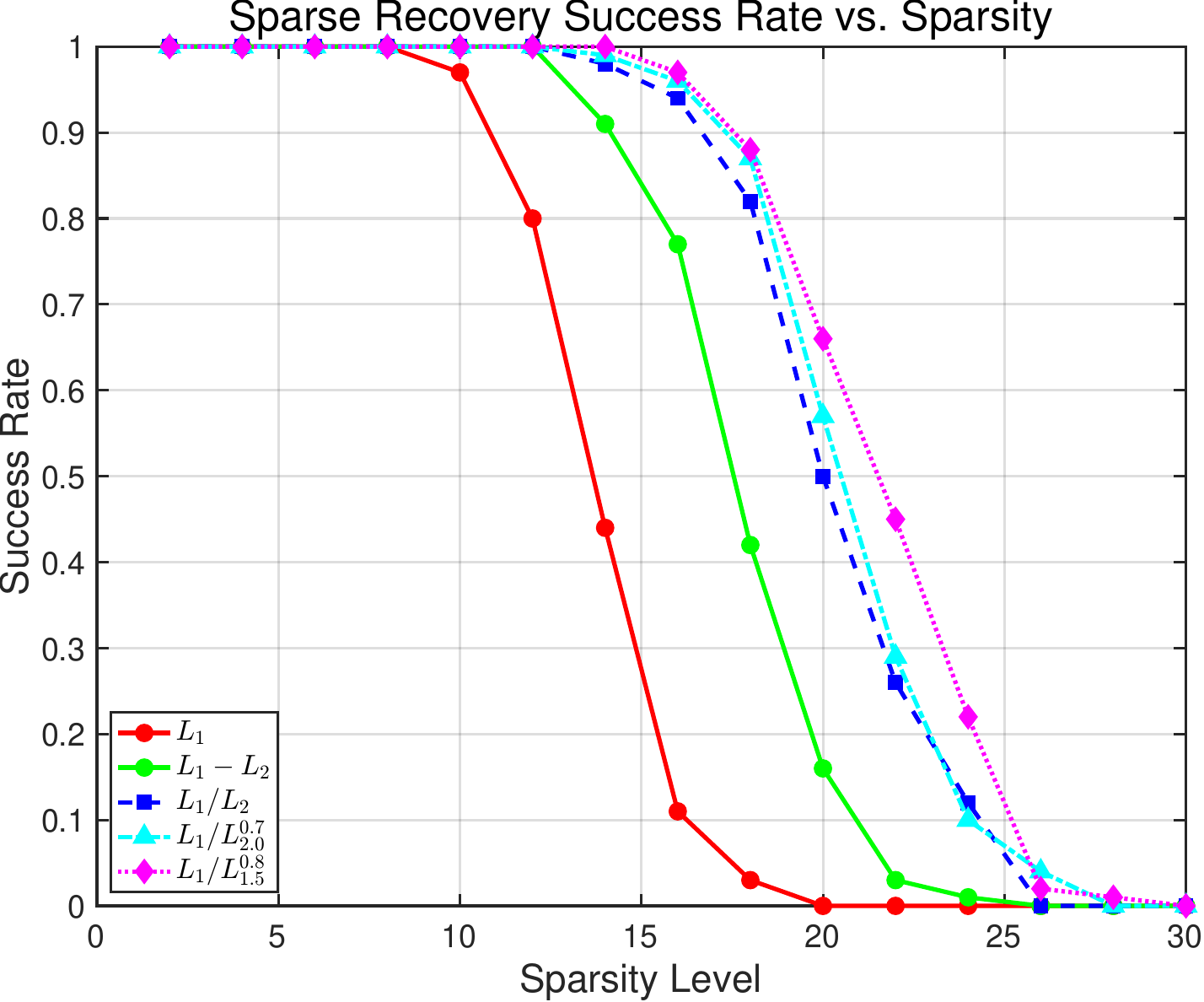}
        \label{fig:Bernoulli-non-dynamic}
    }
    \hfill
    \subfloat[Bernoulli, dynamic]{
        \includegraphics[width=0.3\linewidth]{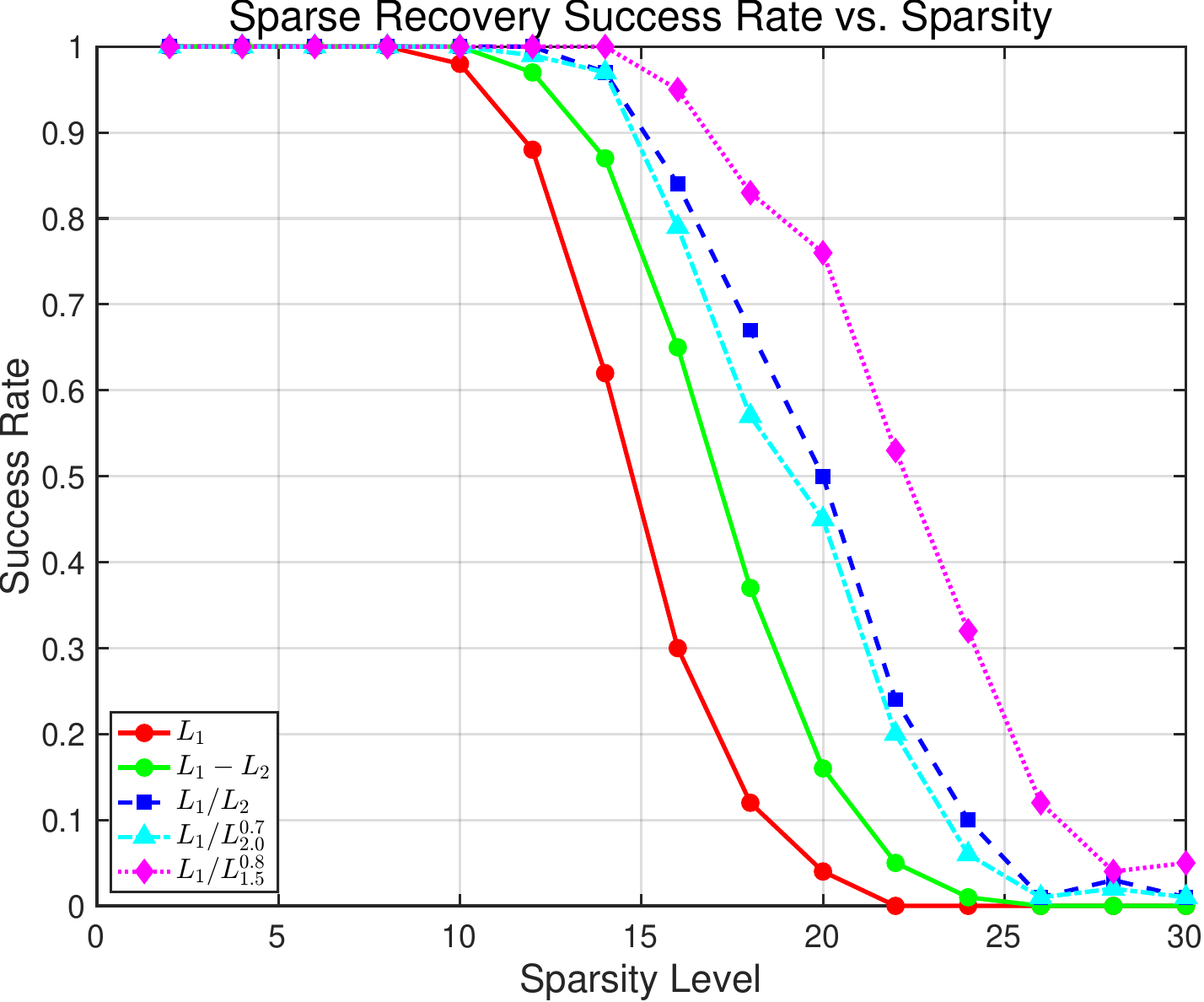}
        \label{fig:Bernoulli-dynamic}
    }
    
    \caption{Success rates of different models for sparse signal recovery under different sensing matrices and sparsity patterns.}
    \label{fig:sparse_recovery_results}
\end{figure*}

We compare five models: the classical $\ell_1$ model, the $\ell_1-\ell_2$ model, the $\ell_1/\ell_2$ model, and two variants of the proposed $\ell_1/\ell_p^q$ formulation with parameter settings $(p,q)=(2.0,0.7)$ and $(p,q)=(1.5,0.8)$. 
The recovery performance is evaluated using the \emph{success rate}. 
A reconstruction $\hat{x}$ is considered successful if the relative error with respect to the ground truth satisfies
\( \frac{\|\hat{x}-x_{\text{true}}\|_2}{\|x_{\text{true}}\|_2} < 10^{-3}. \)
The success rate is defined as the percentage of successful recoveries over 100 independent trials for each configuration.

The success rates for different sparsity levels are shown in Fig.~\ref{fig:sparse_recovery_results}, where the sparsity varies from $2$ to $31$ with step size $2$. 
Overall, the experimental results are consistent with the theoretical prediction, showing the proposed $\ell_1/\ell_p^q$ models achieve the best performance across most experimental configurations.
In particular, the variant with $(p,q)=(1.5,0.8)$ consistently outperforms the other models in all cases except Fig.~\ref{fig:Oversample-DCT-non-dynamic}, where the variant with $(p,q)=(2.0,0.7)$ achieves the highest success rate.
The results also reveal a consistent performance hierarchy among the classical models. 
Except for the dynamic Oversampled DCT case (Fig.~\ref{fig:Oversample-DCT-dynamic}), the experiments generally show that $\ell_1/\ell_2$ performs better than $\ell_1-\ell_2$, which in turn outperforms the classical $\ell_1$ model. 

\subsection{Time-domain Signal Reconstruction}
\label{subsec:time-domain}

We further evaluate the proposed model on a time-domain signal reconstruction task. 
A synthetic signal of length $N=1024$ is constructed as a superposition of several sinusoidal components with different frequencies:
\(
s(t) = \cos\left(\frac{2\pi t}{256}\right) + \sin\left(\frac{2\pi t}{128}\right) + \cos\left(\frac{2\pi t}{64}\right).
\)
This signal consists of a small number of frequency components and is therefore sparse in the frequency domain, leading to a compact representation under the DCT basis. 
Compressive measurements are obtained using a Gaussian random sensing matrix with $M=150$ measurements.

\begin{figure*}[htbp!]
    \centering
    \captionsetup[subfloat]{font=small}
    
    \subfloat[\(\ell_1\) (SNR=26.83)]{
        \includegraphics[width=0.47\linewidth]{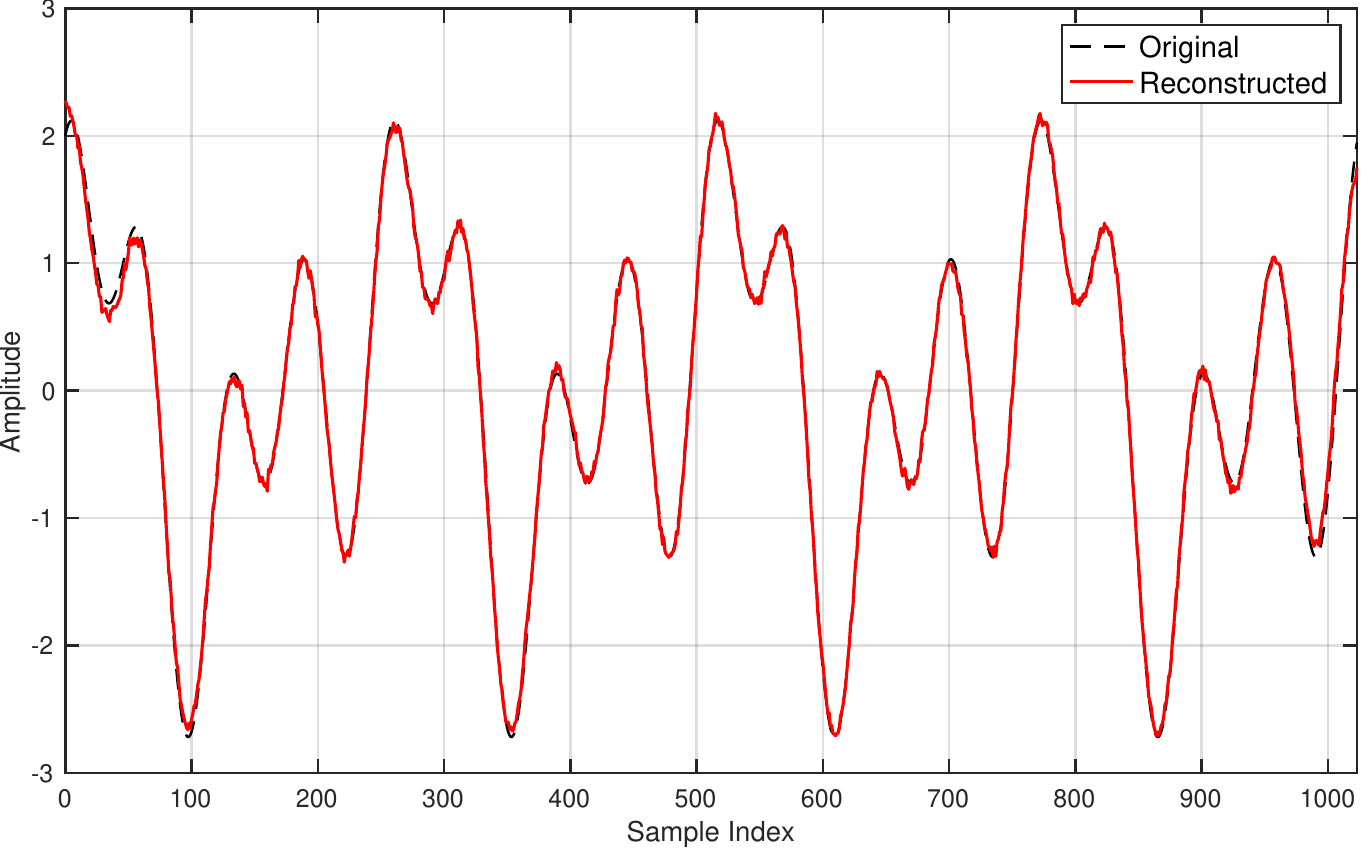}
        \label{fig:time_domain_l1}
    }
    \hfill
    \subfloat[\(\ell_1 - \ell_2\) (SNR=29.14)]{
        \includegraphics[width=0.47\linewidth]{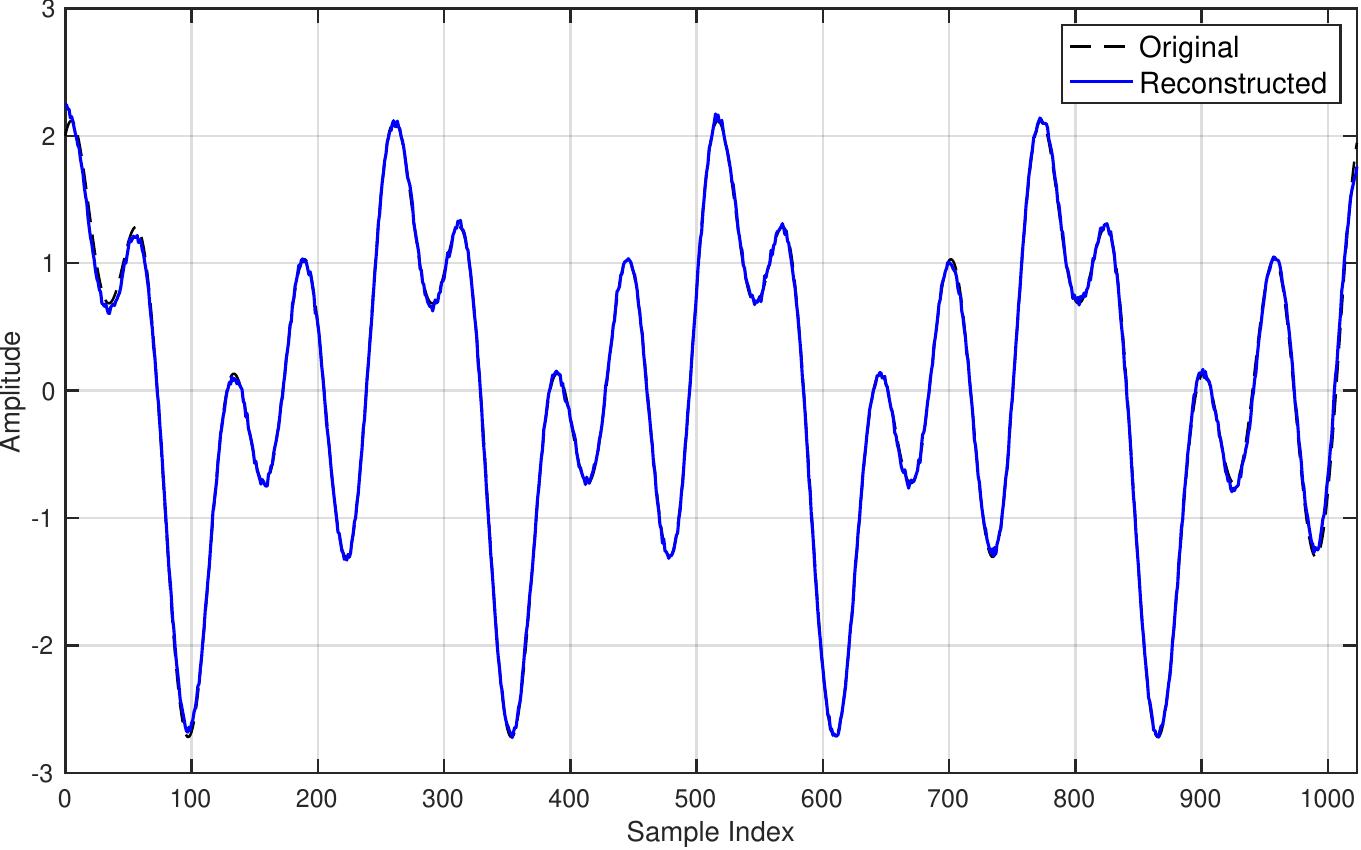}
        \label{fig:time_domain_l1-l2}
    }
    
    \subfloat[\(\ell_1 / \ell_2\) (SNR=30.13)]{
        \includegraphics[width=0.47\linewidth]{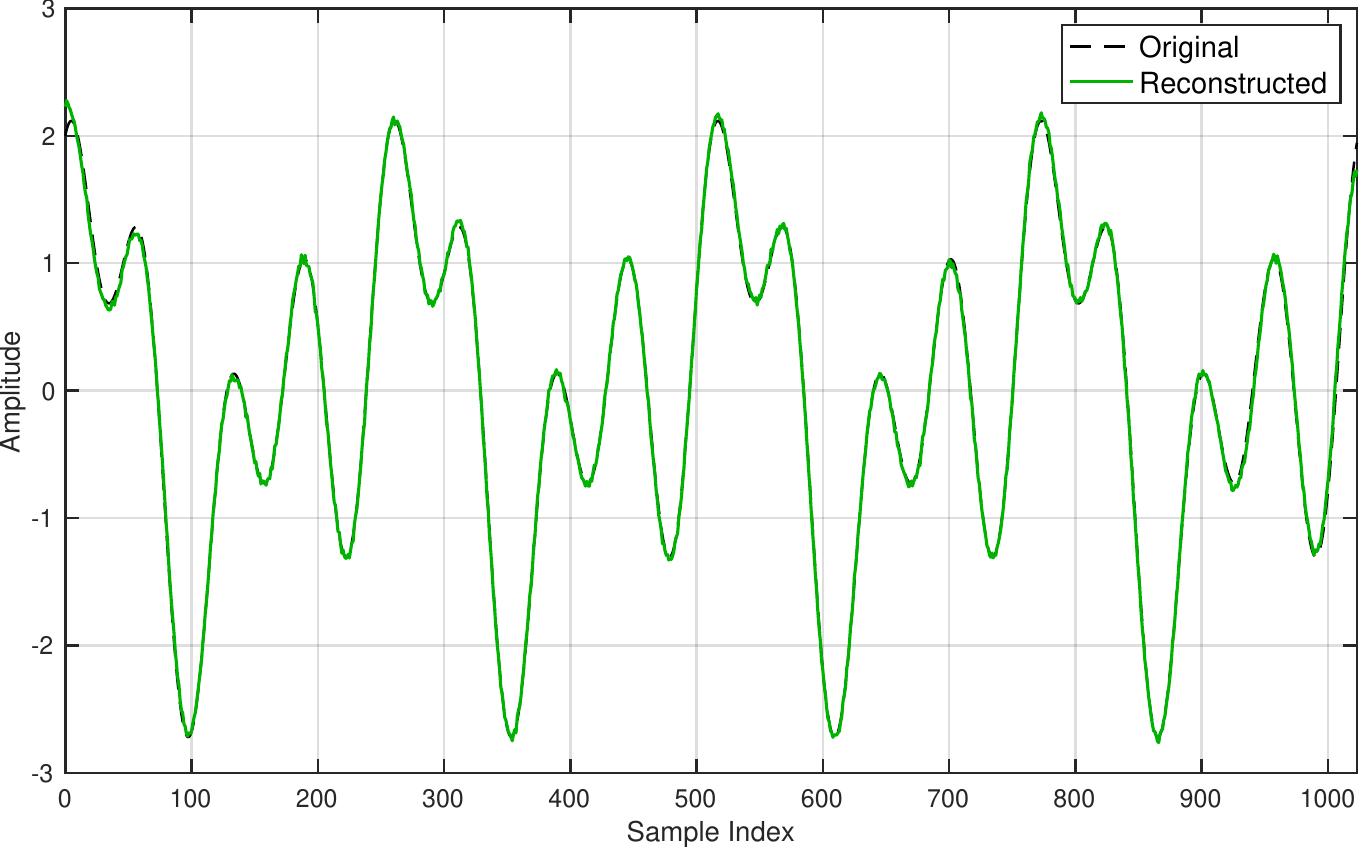}
        \label{fig:time_domain_l1l2}
    }
    \hfill
    \subfloat[\(\ell_1 / \ell_{1.5}^{0.8}\) (SNR=32.44)]{
        \includegraphics[width=0.47\linewidth]{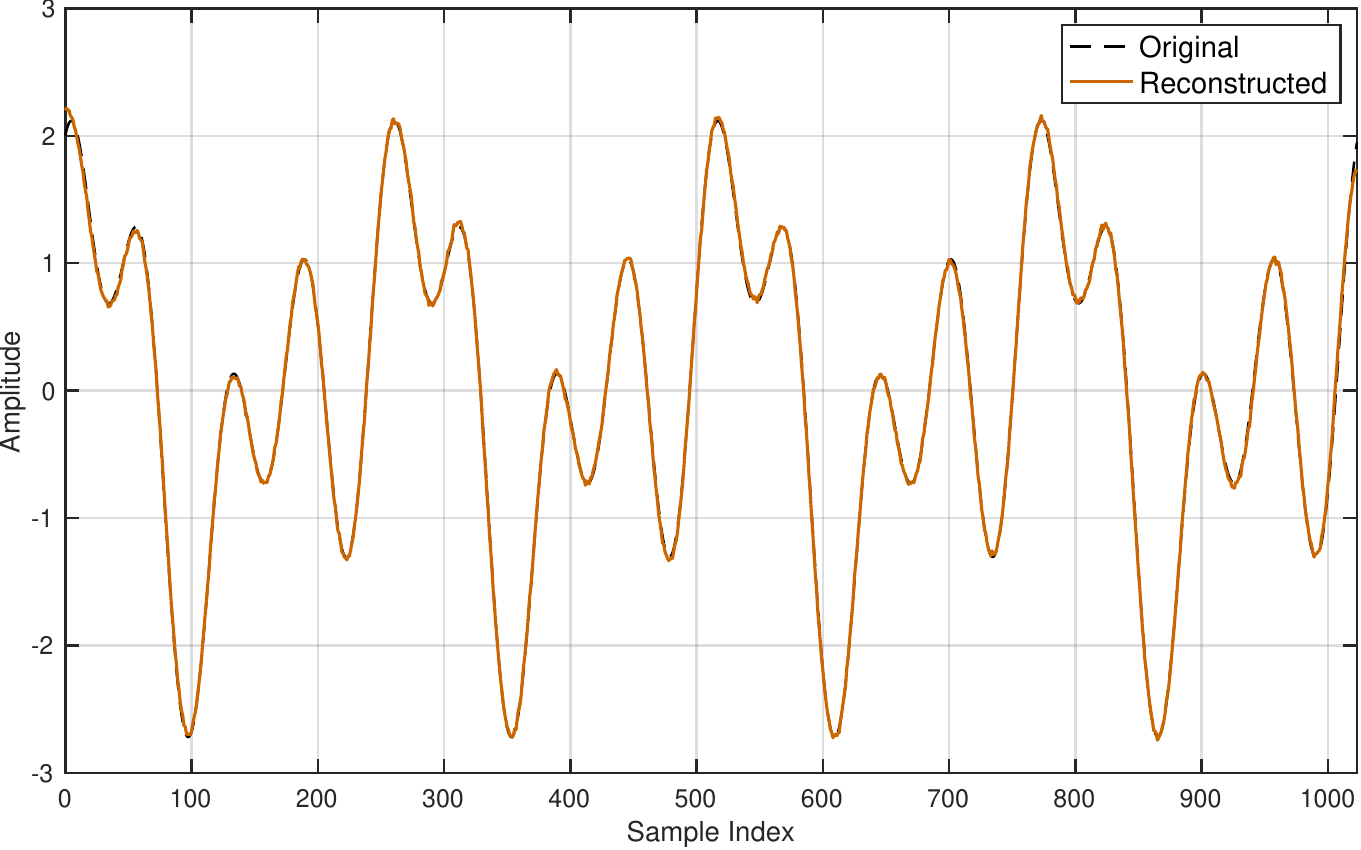}
        \label{fig:time_domain_l1dlpq}
    }
    \caption{Time domain signal reconstruction results.}
    \label{fig:time_domain}
\end{figure*}

To quantitatively evaluate the reconstruction quality, we compute the signal-to-noise ratio (SNR) defined as
\(
\mathrm{SNR} = 20\log_{10}\!\left(\frac{\|s\|_2}{\|s-\hat{s}\|_2}\right),
\)
where $s$ denotes the ground-truth signal and $\hat{s}$ is the reconstructed signal. 
A higher SNR indicates a more accurate reconstruction with smaller recovery error.

We compare four reconstruction models: the classical $\ell_1$ model, the $\ell_1-\ell_2$ model, the $\ell_1/\ell_2$ model, and the proposed $\ell_1/\ell_{1.5}^{0.8}$ model. 
The reconstruction results are shown in Fig.~\ref{fig:time_domain}.
As shown in Fig.~\ref{fig:time_domain}, all methods are able to capture the overall waveform structure. 
However, the proposed $\ell_1/\ell_{1.5}^{0.8}$ model produces the most accurate reconstruction, particularly in preserving the oscillatory patterns and fine-scale variations of the signal, and achieves the highest SNR among all tested models.

\subsection{MRI Reconstruction}
\label{subsec:mri_reconstruction}

As a proof-of-concept validation of the proposed model in real-world imaging applications, we consider MRI reconstruction, where the measurements are acquired in the frequency domain via partial Fourier sampling \cite{lustig2007sparse}. The reconstruction task is to recover an image $u \in \mathbb{R}^{n \times m}$ from under-sampled measurements
\[
\min_{u} \frac{\|\nabla u\|_1}{\|\nabla u\|_p^q} \quad \text{s.t.} \ Au = f, \; u \in [0,1],
\]
where, in implementation, the image $u$ is typically vectorized as an element in $\mathbb{R}^{nm}$. The operator $A = PF$ denotes the partial Fourier sampling operator, where $F$ is the discrete Fourier transform and $P$ is a sampling mask. $\nabla u = (\nabla_x u, \nabla_y u)$ represents the discrete gradient of the image along the horizontal and vertical directions. This formulation follows the common assumption that natural images are approximately sparse in the gradient domain, and can be viewed as an extension of total variation (TV) regularization \cite{rudin1992nonlinear}.

\begin{figure}[htbp!]
    \centering
    \captionsetup[subfloat]{font=small}
    
    \subfloat[Original]{
        \includegraphics[width=0.28\linewidth]{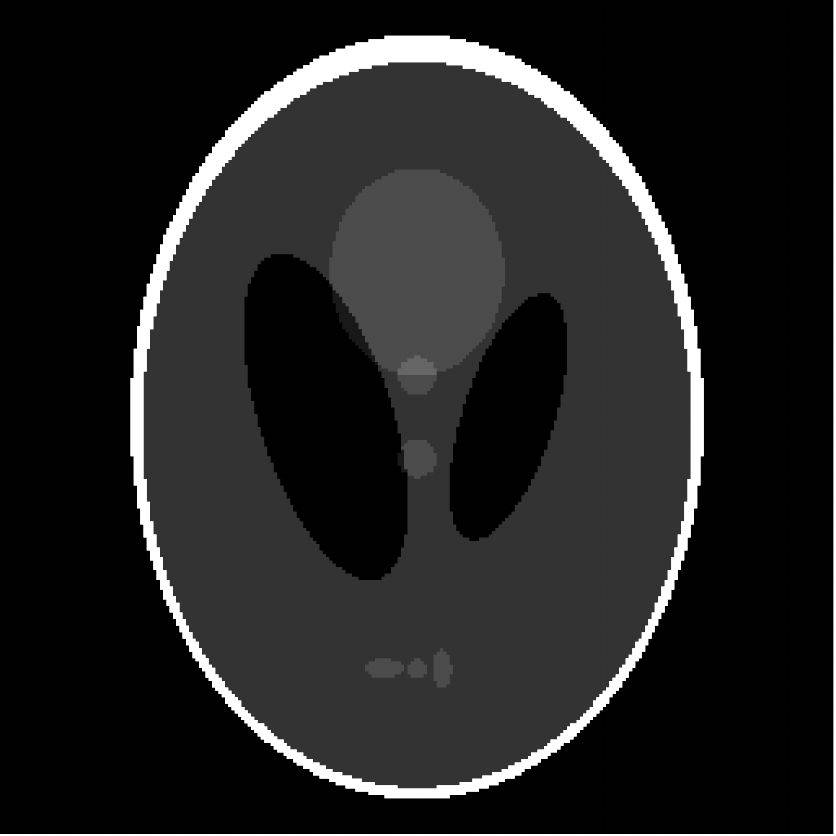}
        \label{fig:mri_original}
    }
    \hfill
    \subfloat[Sampling mask]{
        \includegraphics[width=0.28\linewidth]{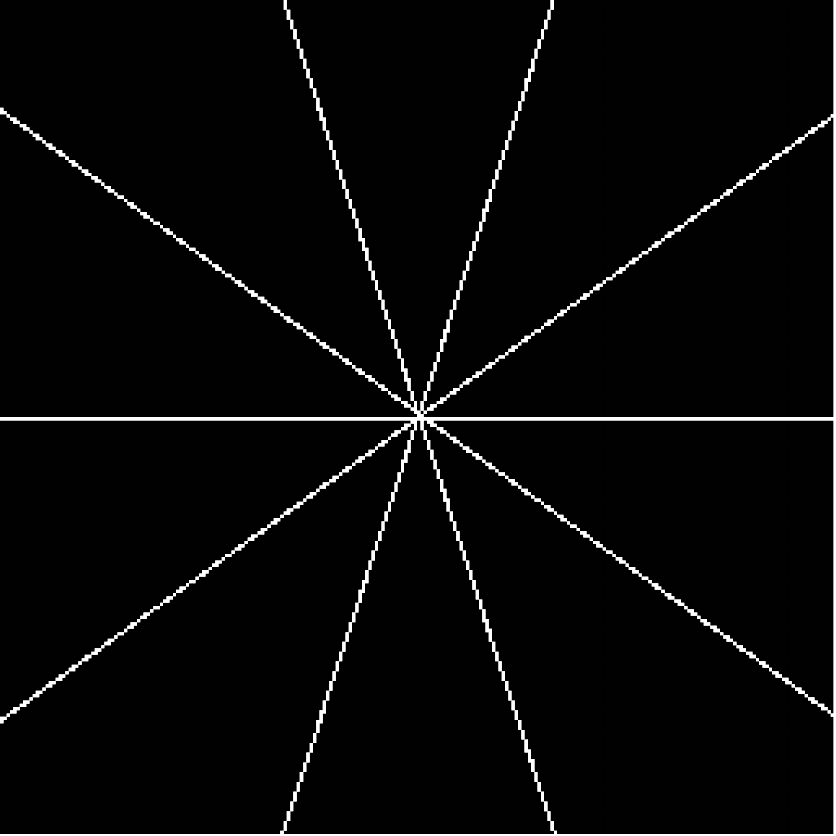}
        \label{fig:mri_mask}
    }
    \hfill
    \subfloat[\(\ell_1\) (RE=53.86\%)]{
        \includegraphics[width=0.28\linewidth]{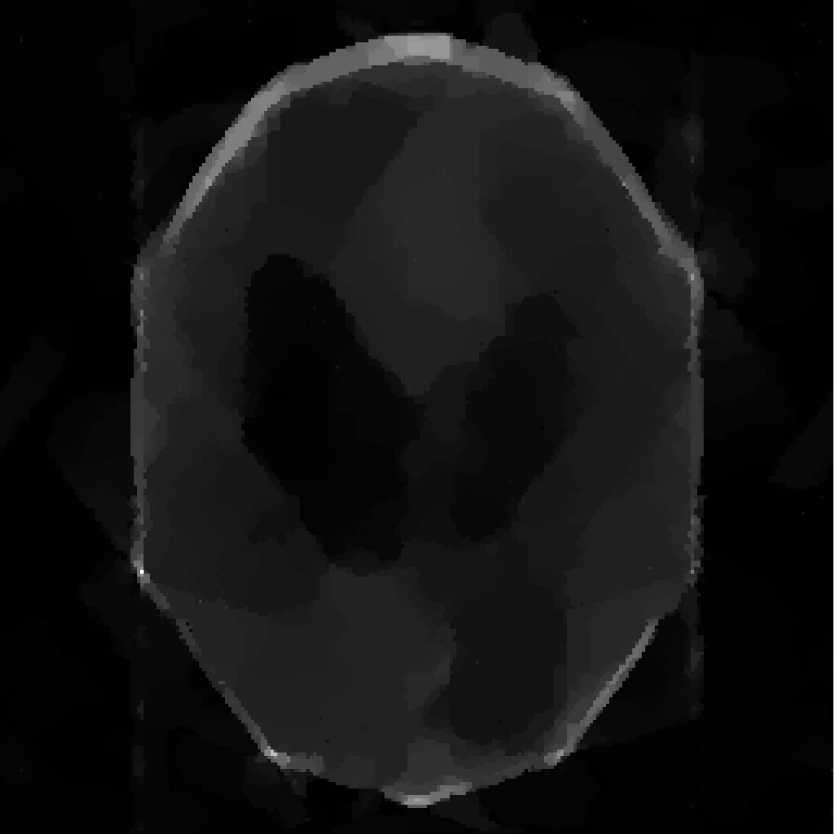}
        \label{fig:mri_l1}
    }

    \subfloat[\(\ell_1 / \ell_2\) (RE=10.30\%)]{
        \includegraphics[width=0.28\linewidth]{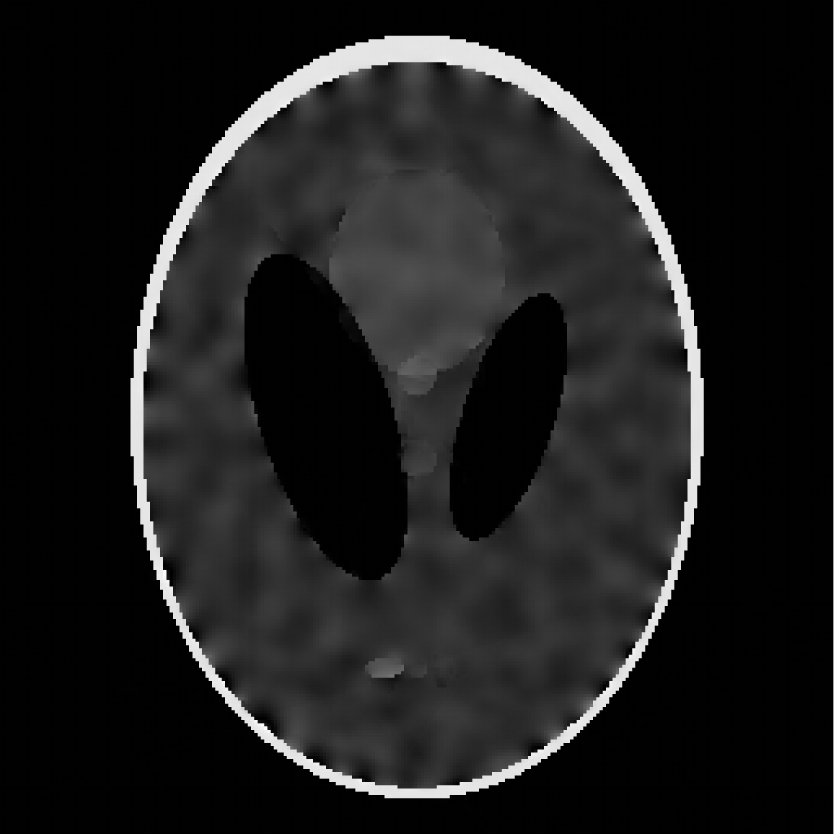}
        \label{fig:mri_l1dl2}
    }
    \hfill
    \subfloat[\(\ell_1 / \ell_{2.0}^{0.7}\) (RE=0.26\%)]{
        \includegraphics[width=0.28\linewidth]{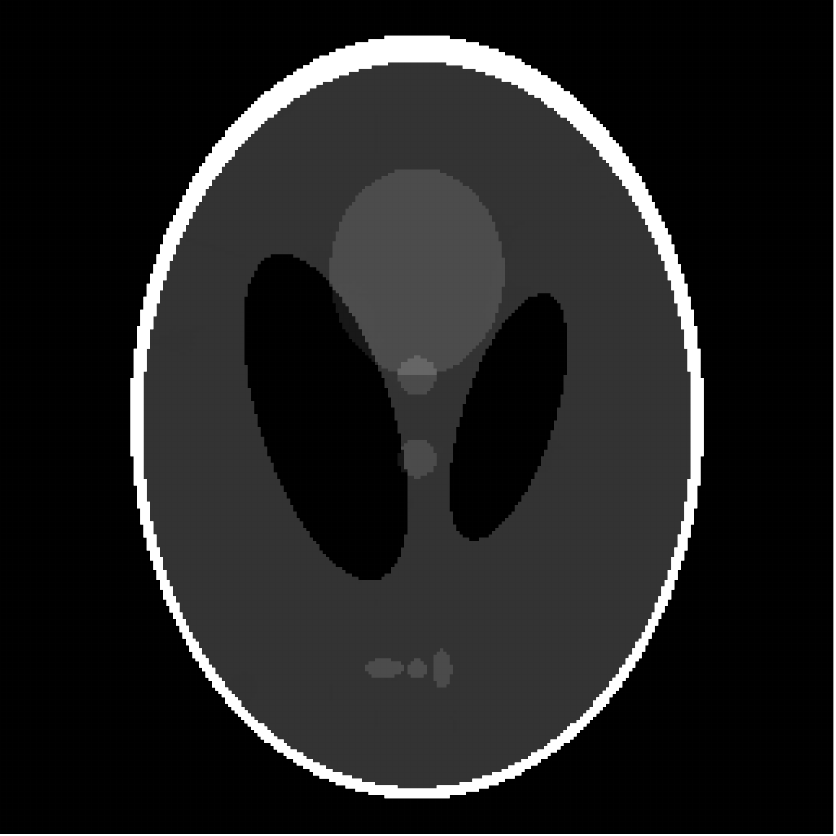}
        \label{fig:mri_l1dlpq1}
    }
    \hfill
    \subfloat[\(\ell_1 / \ell_{2.5}^{0.9}\) (RE=0.15\%)]{
        \includegraphics[width=0.28\linewidth]{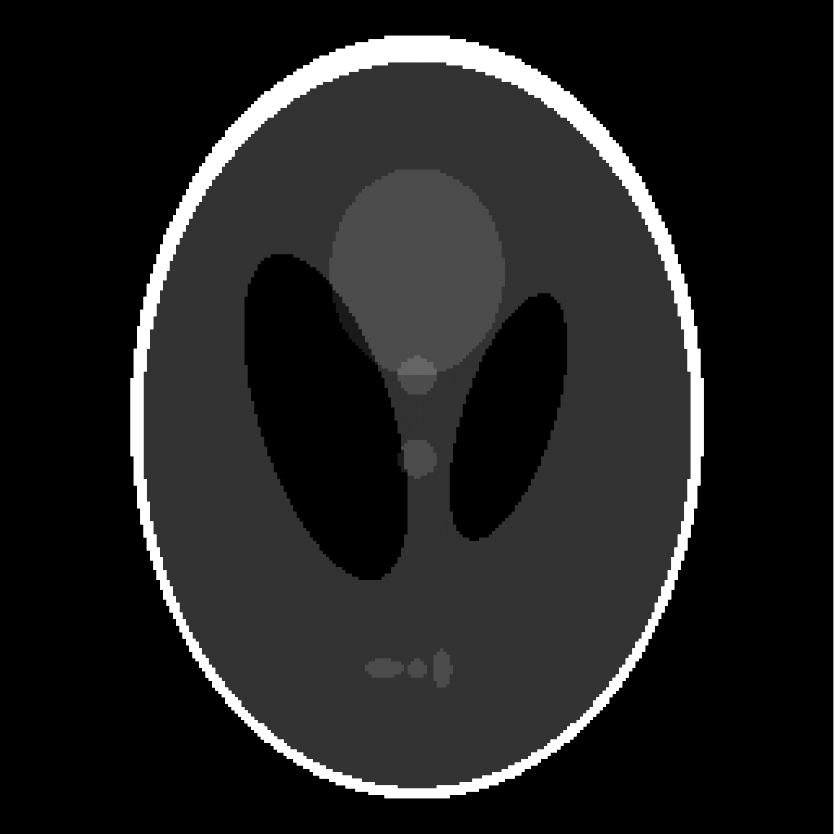}
        \label{fig:mri_l1dlpq2}
    }
    
    \caption{MRI reconstruction results from 5 radial lines in the frequency domain.}
    \label{fig:mri_comparison}
\end{figure}

In our experiments, the $256\times256$ Shepp--Logan phantom is used as the ground truth image, shown in Fig.~\ref{fig:mri_original}. The MRI sampling process is simulated using a radial sampling pattern in the frequency domain, where five radial lines are selected, as illustrated in Fig.~\ref{fig:mri_mask}. Consistent with our constrained formulation, no additional noise is added. The resulting optimization problem is solved using the same framework as described in the previous sections, since the structure remains analogous after introducing the gradient operator.

We compare the classical $\ell_1$ model, the $\ell_1/\ell_2$ model, and two variants of the proposed $\ell_1/\ell_p^q$ formulation with parameter settings $(p,q)=(2.0,0.7)$ and $(p,q)=(2.5,0.9)$. 
The reconstruction results are shown in Fig.~\ref{fig:mri_comparison}.
The classical $\ell_1$ model produces severe artifacts due to the extremely limited measurements, resulting in a high relative error of $53.86\%$ (Fig.~\ref{fig:mri_l1}). 
The $\ell_1/\ell_2$ model improves the reconstruction quality and reduces the error to $10.30\%$, although visible artifacts remain (Fig.~\ref{fig:mri_l1dl2}). 
In contrast, the proposed $\ell_1/\ell_p^q$ models achieve significantly better reconstructions. The configuration $(p,q)=(2.0,0.7)$ reduces the relative error to $0.26\%$ and produces a nearly artifact-free image (Fig.~\ref{fig:mri_l1dlpq1}). 
The configuration $(p,q)=(2.5,0.9)$ further improves the reconstruction quality, achieving the lowest error of $0.15\%$ (Fig.~\ref{fig:mri_l1dlpq2}). 
These results demonstrate the effectiveness of the proposed model for recovering structured images from highly undersampled measurements.

\section{Conclusion}
\label{sec:conclusion}

In this paper, we proposed a sparse signal recovery model based on the $\ell_1/\ell_p^q$ regularization, and established its connection with the $\ell_1 - \alpha \ell_p$ formulation through the equivalence of first-order stationary points. 
We further derived a sufficient recovery condition under the Restricted Isometry Property (RIP), showing that smaller values of $q$ lead to less restrictive bounds and improved error estimates.
To solve the resulting problem, we developed an efficient majorization--minimization (MM) algorithm and established its convergence via the Kurdyka--Łojasiewicz (KL) property. 
Numerical experiments demonstrate that the proposed method achieves superior performance across a range of representative sparse recovery tasks.
Although smaller $q$ improves theoretical recovery guarantees, it may hinder algorithmic performance. 
Future work will aim to improve both the modeling capability and the robustness of the optimization algorithm.

\appendix
\section{Proofs of Lemma~\ref{lemma:1} and Lemma~\ref{lemma:2}}
\label{app:rip_proofs}

\begin{proof}[Proof of Lemma~\ref{lemma:1}]
  Since \( \bar{x} \) is a solution of the noise-aware formulation for \eqref{prob.proposed} and \(\bar{x} = x + h\), we have
  \begin{align*}
    \frac{\| x \|_1}{\| x \|_p^q} &\geq \frac{\|\bar{x}\|_1}{\|\bar{x}\|^q_p} 
    \geq \frac{\| x \|_1 - \| h_{\text{supp}(x)} \|_1 + \| h_{\text{supp}(x)^c} \|_1}{\| x + h \|_p^q}. \\
    \end{align*}
  Thus, we can derive that
  \begin{align*}
    \| h_{(-k)} \|_1 &\leq \| h_{\text{supp}(x)^c} \|_1 
    \leq \frac{\| x \|_1}{\| x \|_p^q}   \| x + h \|_p^q - \| x \|_1 + \|h_{(k)}\|_1 \\
      &\leq \| x \|_1 \left( 1 + \frac{\| h \|_p}{\| x \|_p} \right)^q - \| x \|_1 + \|h_{(k)}\|_1 \\
      &\leq^{(a)} \| x \|_1 \left( 1 + q \frac{\| h \|_p}{\| x \|_p} \right)  - \| x \|_1 + \|h_{(k)}\|_1 
      \leq q\gamma\|h\|_p + \|h_{(k)}\|_1 \\
      &\leq^{(b)} (q\gamma  + k^{1-\frac{1}{p}})\|h_{(k)}\|_p + q\gamma  \|h_{(-k)}\|_p
  \end{align*}
  where \( (a) \) is due to the fact that \((1+t)^q \leq 1+ qt\) for \( t \geq 0 \) and \( q \leq 1 \) and \( (b) \) is due to the H\"older inequality \(\|h_{(k)}\|_1 \le k^{1-\frac{1}{p}}\|h_{(k)}\|_p\).
  \end{proof}

\begin{proof}[Proof of Lemma~\ref{lemma:2}]
Define $f(x) = x^p - a x - b$, where $a = q\gamma k^{-\frac{p-1}{p}}$ and $b = q\gamma k^{-\frac{p-1}{p}} + 1$. We have $f(0) = -b < 0$ and $\lim_{x \to +\infty} f(x) = +\infty$. The derivative is $f'(x) = p x^{p-1} - a$. The critical point is at $x_\Omega = \left(\frac{a}{p}\right)^{\frac{1}{p-1}} = k^{-\frac{1}{p}} \left(\frac{q\gamma}{p}\right)^{\frac{1}{p-1}}$. Since $f''(x) = p(p-1)x^{p-2} > 0$ for $x>0$, $f$ is convex on $(0,+\infty)$. Therefore, $f$ is decreasing on $(0, x_\Omega]$ and increasing on $[x_\Omega, +\infty)$. The unique positive root $t_0$ must satisfy $t_0 > x_\Omega$.

To establish the lower bound $t_0 > (kp)^{-\frac{1}{p}} (q\gamma)^{\frac{1}{p-1}}$, note that
\[
x_\Omega = k^{-\frac{1}{p}} \left(\frac{q\gamma}{p}\right)^{\frac{1}{p-1}} = (kp)^{-\frac{1}{p}} (q\gamma)^{\frac{1}{p-1}} \cdot p^{\frac{1}{p-1}}.
\]
Since $p^{\frac{1}{p-1}} > 1$ for $p>1$, we have $x_\Omega > (kp)^{-\frac{1}{p}} (q\gamma)^{\frac{1}{p-1}}$. As $t_0 > x_\Omega$, the desired inequality follows.
\end{proof}

\bibliographystyle{siamplain}
\bibliography{reference.bib}

@article{attouch2009convergence,
  title={On the convergence of the proximal algorithm for nonsmooth functions involving analytic features},
  author={Attouch, Hedy and Bolte, J{\'e}r{\^o}me},
  journal={Mathematical Programming},
  volume={116},
  number={1},
  pages={5--16},
  year={2009},
  publisher={Springer}
}

@article{attouch2013convergence,
  title={Convergence of descent methods for semi-algebraic and tame problems: proximal algorithms, forward--backward splitting, and regularized Gauss--Seidel methods},
  author={Attouch, Hedy and Bolte, J{\'e}r{\^o}me and Svaiter, Benar Fux},
  journal={Mathematical programming},
  volume={137},
  number={1},
  pages={91--129},
  year={2013},
  publisher={Springer}
}

@article{hoyer2004non,
  title={Non-negative matrix factorization with sparseness constraints},
  author={Hoyer, Patrik O},
  journal={Journal of machine learning research},
  volume={5},
  number={Nov},
  pages={1457--1469},
  year={2004}
}

@article{chen2024enhancing,
  title={Enhancing Convergence of Decentralized Gradient Tracking under the KL Property},
  author={Chen, Xiaokai and Cao, Tianyu and Scutari, Gesualdo},
  journal={arXiv preprint arXiv:2412.09556},
  year={2024}
}

@article{zhan2025p,
  title={$L_{p}/L_{q}$ Minimization Method and Stable Recovery of Approximately Block k-Sparse Signals},
  author={Zhan, Kunsheng and Wan, Anhua},
  journal={IEEE Transactions on Signal Processing},
  year={2025},
  publisher={IEEE}
}

@article{fan2001variable,
  title     = {Variable selection via nonconcave penalized likelihood and its oracle properties},
  author    = {Fan, Jianqing and Li, Runze},
  journal   = {Journal of the American statistical Association},
  volume    = {96},
  number    = {456},
  pages     = {1348--1360},
  year      = {2001},
  publisher = {Taylor \& Francis}
}

@misc{gurobi,
  author = {{Gurobi Optimization, LLC}},
  title = {{Gurobi Optimizer Reference Manual}},
  year = 2026,
  url = "https://www.gurobi.com"
}

@article{rudelson2008sparse,
  title={On sparse reconstruction from Fourier and Gaussian measurements},
  author={Rudelson, Mark and Vershynin, Roman},
  journal={Communications on Pure and Applied Mathematics: A Journal Issued by the Courant Institute of Mathematical Sciences},
  volume={61},
  number={8},
  pages={1025--1045},
  year={2008},
  publisher={Wiley Online Library}
}

@article{baraniuk2007compressive,
  title={Compressive sensing [lecture notes]},
  author={Baraniuk, Richard G},
  journal={IEEE signal processing magazine},
  volume={24},
  number={4},
  pages={118--121},
  year={2007},
  publisher={IEEE}
}

@article{beck2009fast,
  title={A fast iterative shrinkage-thresholding algorithm for linear inverse problems},
  author={Beck, Amir and Teboulle, Marc},
  journal={SIAM journal on imaging sciences},
  volume={2},
  number={1},
  pages={183--202},
  year={2009},
  publisher={SIAM}
}

@article{tao2023study,
  title={Study on $L_1$ over $L_2$ minimization for nonnegative signal recovery},
  author={Tao, Min and Zhang, Xiao-Ping},
  journal={Journal of Scientific Computing},
  volume={95},
  number={3},
  pages={94},
  year={2023},
  publisher={Springer}
}

@article{jia2025sparse,
  title={Sparse recovery: The square of $\ell_1/\ell_2$ norms},
  author={Jia, Jianqing and Prater-Bennette, Ashley and Shen, Lixin and Tripp, Erin E},
  journal={Journal of Scientific Computing},
  volume={102},
  number={1},
  pages={24},
  year={2025},
  publisher={Springer}
}

@article{RN471,
  title={A scale-invariant approach for sparse signal recovery},
  author={Rahimi, Yaghoub and Wang, Chao and Dong, Hongbo and Lou, Yifei},
  journal={SIAM Journal on Scientific Computing},
  volume={41},
  number={6},
  pages={A3649--A3672},
  year={2019},
  publisher={SIAM}
}

@article{RN547,
   author = {Xie, Yujia and Su, Xinhua and Ge, Huanmin},
   title = {RIP Analysis for $\ell_{1}/\ell_{p}   (p > 1)$ Minimization Method},
   journal = {IEEE Signal Processing Letters},
   ISSN = {1070-9908},
   year = {2023},
   type = {Journal Article}
}

@article{RN475,
   author = {Wang, Chao and Tao, Min and Chuah, Chen-Nee and Nagy, James and Lou, Yifei},
   title = {Minimizing $L_1$ over $L_2$ norms on the gradient},
   journal = {Inverse Problems},
   volume = {38},
   number = {6},
   pages = {065011},
   ISSN = {0266-5611},
   year = {2022},
   type = {Journal Article}
}

@article{RN493,
  title={Minimization of $L_1$ Over $L_2$ for Sparse Signal Recovery with Convergence Guarantee},
  author={Tao, Min},
  journal={SIAM Journal on Scientific Computing},
  volume={44},
  number={2},
  pages={A770--A797},
  year={2022},
  publisher={SIAM}
}

@article{RN540,
   author = {Zeng, Liaoyuan and Yu, Peiran and Pong, Ting Kei},
   title = {Analysis and algorithms for some compressed sensing models based on L1/L2 minimization},
   journal = {SIAM Journal on Optimization},
   volume = {31},
   number = {2},
   pages = {1576-1603},
   ISSN = {1052-6234},
   year = {2021},
   type = {Journal Article}
}

@article{RN465,
  title={Accelerated schemes for the $ L_1/L_2 $ minimization},
  author={Wang, Chao and Yan, Ming and Rahimi, Yaghoub and Lou, Yifei},
  journal={IEEE Transactions on Signal Processing},
  volume={68},
  pages={2660--2669},
  year={2020},
  publisher={IEEE}
}

@article{Bolte2014,
  title={Proximal alternating linearized minimization for nonconvex and nonsmooth problems},
  author={Bolte, J{\'e}r{\^o}me and Sabach, Shoham and Teboulle, Marc},
  journal={Mathematical Programming},
  volume={146},
  number={1-2},
  pages={459--494},
  year={2014},
  publisher={Springer}
}

@article{Bolte2010,
  title={Proximal alternating minimization and projection methods for nonconvex problems: An approach based on the Kurdyka-{\L}ojasiewicz inequality},
  author={Attouch, H{\'e}dy and Bolte, J{\'e}r{\^o}me and Redont, Patrick and Soubeyran, Antoine},
  journal={Mathematics of operations research},
  volume={35},
  number={2},
  pages={438--457},
  year={2010},
  publisher={INFORMS}
}

@article{natarajan1995sparse,
  title={Sparse approximate solutions to linear systems},
  author={Natarajan, Balas Kausik},
  journal={SIAM journal on computing},
  volume={24},
  number={2},
  pages={227--234},
  year={1995},
  publisher={SIAM}
}

@article{chen2001atomic,
  title={Atomic decomposition by basis pursuit},
  author={Chen, Scott Shaobing and Donoho, David L and Saunders, Michael A},
  journal={SIAM review},
  volume={43},
  number={1},
  pages={129--159},
  year={2001},
  publisher={SIAM}
}

@article{candes2005decoding,
  title={Decoding by linear programming},
  author={Candes, Emmanuel J and Tao, Terence},
  journal={IEEE transactions on information theory},
  volume={51},
  number={12},
  pages={4203--4215},
  year={2005},
  publisher={IEEE}
}

@article{chartrand2007exact,
  title={Exact reconstruction of sparse signals via nonconvex minimization},
  author={Chartrand, Rick},
  journal={IEEE Signal Processing Letters},
  volume={14},
  number={10},
  pages={707--710},
  year={2007},
  publisher={IEEE}
}

@article{boyd2011distributed,
  title={Distributed optimization and statistical learning via the alternating direction method of multipliers},
  author={Boyd, Stephen and Parikh, Neal and Chu, Eric and Peleato, Borja and Eckstein, Jonathan and others},
  journal={Foundations and Trends{\textregistered} in Machine learning},
  volume={3},
  number={1},
  pages={1--122},
  year={2011},
  publisher={Now Publishers, Inc.}
}

@article{huo_boldsymboll_1-beta_2023,
  title={$L_1 - \beta L_q$ Minimization for signal and image recovery},
  author={Huo, Limei and Chen, Wengu and Ge, Huanmin and Ng, Michael K},
  journal={SIAM Journal on Imaging Sciences},
  volume={16},
  number={4},
  pages={1886--1928},
  year={2023},
  publisher={SIAM}
}

@article{lustig2007sparse,
  title={Sparse MRI: The application of compressed sensing for rapid MR imaging},
  author={Lustig, Michael and Donoho, David and Pauly, John M},
  journal={Magnetic Resonance in Medicine: An Official Journal of the International Society for Magnetic Resonance in Medicine},
  volume={58},
  number={6},
  pages={1182--1195},
  year={2007},
  publisher={Wiley Online Library}
}

@article{rudin1992nonlinear,
  title={Nonlinear total variation based noise removal algorithms},
  author={Rudin, Leonid I and Osher, Stanley and Fatemi, Emad},
  journal={Physica D: nonlinear phenomena},
  volume={60},
  number={1-4},
  pages={259--268},
  year={1992},
  publisher={Elsevier}
}

\end{document}